\newtheorem{proposition}{\bf Proposition}
\newtheorem{lemma}{\bf Lemma}
\newtheorem{remark}{\bf Remark}
\title{Closed form perturbation theory in the restricted three-body problem without relegation}\date{}
\author[1]{Irene Cavallari,}
\author[2]{Christos Efthymiopoulos}
\affil[1]{Dipartimento di Matematica, Universit\`a di Pisa}
\affil[2]{Dipartimento di Matematica Tullio Levi Civita, Universit\`a degli studi di Padova}
\def \br{\textbf{r}}
\def\brp{\textbf{r}_P}
\def \bp{\textbf{p}}
\def \aV{a^*}
\def \etaV{\eta^*}
\def \etaVP{\eta_+}
\def \uV{u^*}
\begin{document}
	
	\maketitle	
		
\begin{abstract}
We propose a closed-form normalization method suitable for the study of the secular dynamics of small bodies in heliocentric orbits perturbed by the tidal potential of a planet with orbit external to the orbit of the small body. The method makes no use of relegation, thus, circumventing all convergence issues related to that technique. The method is based on a convenient use of a book-keeping parameter keeping simultaneously track of all the small quantities in the problem. The book-keeping affects both the Lie series and the Poisson structure employed in successive perturbative steps. In particular, it affects the definition of the normal form remainder at every normalization step. We show the results obtained by assuming Jupiter as perturbing planet and we discuss the validity and limits of the method. 
\end{abstract}

\section{Introduction}
The present paper is about the development of a method to compute a secular normal form in the framework of the restricted three-body problem (R3BP). We are interested in the heliocentric dynamics of a massless particle perturbed by the tidal potential of an external planet, i.e. a planet whose orbit is exterior to the particle's one. The objective is to define a transformation leading to a Hamiltonian function suitable to study the particle's secular dynamics, i.e. we search for a normal form not depending on the fast angles characterizing the problem. Using modified Delaunay variables, the latter are the mean longitudes of the particle and of the planet. 

\paragraph{}
The Hamiltonian for the problem of interest is equal to the sum of two components,  a leading term $Z_0$ not depending on the fast angles and a disturbing function $R$:
\[
\mathcal{H}=Z_0+ R.
\] 
The leading term is
\[
Z_0 = -\frac{(\mathcal{GM})^2}{2\Lambda}+n_PI_P,
\]
where $n_P$ is the planet's mean motion and $I_P$ is a dummy action variable canonically conjugated to the planet's mean longitude $\lambda_P$ such that
\[
\frac{d \lambda_P}{dt}=\frac{\partial \mathcal{H}}{\partial I_P}.
\]
Moreover, we have
\[
R = \mu \tilde{R}, \qquad \tilde{R}=\tilde{R} (a(\Lambda),e(\Lambda,\Gamma),i(\Lambda,\Gamma,\Theta),\Omega(\theta),\omega(\gamma,\theta),u(\Lambda,\Gamma,\lambda),f_P(\lambda_P); e_P, a_P)
\] 
with $a_P$, $e_P$, $f_P$ the semi-major axis, eccentricity and true anomaly of the planet and $a$,$e$,$i$,$\Omega$,$\omega$,$u$ the orbital elements of the particle ($u$ is the eccentric anomaly); $(\Lambda,\Gamma,\Theta,\lambda,\gamma,\theta)$ are modified Delaunay variables; $\mathcal{M}$ is the mass of the Sun, $\mathcal{G}$ is the Newton's gravity parameter and $\mu=\mathcal{G}m_P$, with $m_P$ the mass of the planet.

As typical in perturbation theory, in the sequel we will write  $\mathcal{H}$ in the form of a formal series:
\begin{equation}
\mathcal{H} = Z_0 + \sum_{s=1}^{+\infty}\epsilon^sR_s
\label{Ham0Class}
\end{equation}
where $\epsilon$ is a formal parameter, called \textit{book-keeping} parameter, used to assess the size of each perturbing term composing $R$. The normal form has to be computed iteratively, through, for example, a composition of Lie transformations \cite{Deprit1969}. A Lie transformation is a canonical transformation between two sets of canonical variables, $z_1$ and $z_2$, given by
\[
z_1 = \exp(\mathcal{L_{\chi}})z_2
\]
with the operator $\exp(\mathcal{L_{\chi}})$ defined as
\[
\exp(\mathcal{L_{\chi}})=\sum_{j=0}^{\infty}\frac{1}{j!}\mathcal{L_{\chi_j}^j},
\]
where $\mathcal{L_{\chi}}=\{\cdot,\chi\}$ is the Poisson bracket operator and 
\[
\mathcal{L_{\chi}^j}f = \underbrace{\{\dots\{\{f,\chi\},\chi\}\dots,\chi\}}_{j \hspace{1mm}\mbox{times}},
\]
see \cite{Christos}. The function $\chi$ is called a Lie generating function.
Performing $r$ steps of the normalization process, the transformation between the original and the final variables $z^{(0)}$, $z^{(r)}$
\[
z^{(0)}=(\Lambda^{(0)}, \Gamma^{(0)},\Theta^{(0)},I_P^{(0)},\lambda^{(0)},\gamma^{(0)},\theta^{(0)},\lambda_P^{(0)}),\]
\[z^{(r)}=(\Lambda^{(r)}, \Gamma^{(r)},\Theta^{(r)},I_P^{(r)},\lambda^{(r)},\gamma^{(r)},\theta^{(r)},\lambda_P^{(r)})\]
 is
\[
z^{(0)}=\exp(\mathcal{L}_{\chi_r})\exp(\mathcal{L}_{\chi_{r-1}})\dots \exp(\mathcal{L}_{\chi_1})z^{(r)}.
\]  
At the $j$-th iteration,  we perform a Lie transformation leading to a new Hamiltonian $\mathcal{H}^{(j)}$ given by 
\[
\mathcal{H}^{(j)} = \exp(\mathcal{L}_{\chi_j})\mathcal{H}^{(j-1)}
\]
with
\[
\mathcal{H}^{(j-1)} = Z_0+\sum_{s=1}^{j-1}\epsilon^sZ_s + \sum_{s=j}^{+\infty}\epsilon^sR_s^{(j)};
\]
where $Z_i$, $i=1\dots j-1$ are normal form terms arising at previous normalization steps. We have 
\[\mathcal{H}^{(0)}=\mathcal{H}, \qquad R_s^{(0)}=R_s, \forall s\in[1,+\infty).\]

Following the above iterative procedure, the Hamiltonian $\mathcal{H}^{(r)}$  after $r$ normalization steps is given by the sum between the normal form 
\[
Z = \sum_{s=0}^{r}Z_s,
\]
and a remainder with a (hopefully) small size. Each generating function $\chi_j$, $j=1\dots r$, is determined by solving the so-called homological equation
\begin{equation}
\{Z_0,\chi_j \}+\epsilon^j R_j^{(j-1)}=\epsilon^j Z_j, 
\label{homEq0}
\end{equation}
with 
\[
\{Z_0,\chi_j\}=-\Big(n\frac{\partial}{\partial \lambda}+n_P\frac{\partial}{\partial \lambda_P}\Big)\chi_j, \qquad n = -\frac{(\mathcal{GM})^2}{\Lambda^3}.
\]
 
In our problem, solving \eqref{homEq0} can be complex, since the disturbing function $R$ is not directly available as a sum of trigonometric terms over the problem's canonical angles. In fact, $R$ depends on the mean longitudes of the planet and of the particle through geometric angles, i.e. the true anomalies or the eccentric anomalies. This implies that we have to solve Kepler's equation in series form in order to obtain the required trigonometric expansions in the angles $\lambda$, $\lambda_P$. Typically, to overcome this difficulty the original Hamiltonian is approximated by means of a Taylor expansion in some small parameter truncated at an adequate order to make explicit the dependence on the fast angles (see \cite{Tisserand1889,Brouwer1961,Kaula1966,Vinh1970,Brumberg1994}). As an example, consider an expansion over the eccentricities of both the particle and the planet. The book-keeping parameter in \eqref{Ham0Class} will depend on $e$ and $e_P$ and each term $R_s^{(j-1)}$ will have the form 
\[
R_s^{(j-1)} = \sum_{\bm{k}} R_{s,\bm{k}}^{(j-1)}\cos(k_1\lambda+k_2\lambda_P+k_3\gamma+k_4\theta)
\]
where $R_{s,\bm{k}}^{(j-1)}=R_{s,\bm{k}}^{(j-1)}(\Lambda,\Gamma,\Theta;e_P,a_P)$ are the so-called Laplace coefficients \cite{Brouwer1961}. The generating function $\chi_{j}$ satisfying \eqref{homEq0} is
\[
\chi_{j} = \sum_{\bm{k}} \frac{R_{j,\bm{k}}^{(j-1)}}{nk_1+n_Pk_2}\sin(k_1\lambda+k_2\lambda_P+k_3\gamma+k_4\theta).
\]
Some examples of application of the above method of expansion can be found in \cite{Metris1993,Wnuk1988,Lara2011}. However, an important drawback of this technique is that it can be applied only for lowly eccentric orbits. To address this issue, an alternative, introduced in \cite{Palacian1992} and formalized in \cite{Deprit2001,Segerman2000}, is the so called relegation method; it consists in neglecting one of the two components of the leading term in the homological equation, so that this can be solved in closed form. A multipole expansion of the initial disturbing function $R$ is performed so that each $R_s^{(j-1)}$ has the form
\[
R_s^{(j-1)}=\sum_{\bm{k}} \frac{f_{\bm{k}}(\Lambda,\Gamma,\Theta;e_P,a_P)}{r(\Lambda,\Gamma,\lambda)}\cos\big({k_1u(\Lambda,\Gamma,\lambda)+k_2f_P(\lambda_P)+k_3\gamma+k_4\theta}\big),
\]
where $r$ is the heliocentric radius of the particle equal to
$
r = a\big(1-e\cos{u}\big)
$.
Since the planet's trajectory is external to the particle's trajectory, we have $n_P<n$. Hence, treating $n_P$ as a small quantity (of  `book-keeping order $1$', see below), instead of the homological equation \eqref{homEq0}, we work with a homological equation involving only the particle's Keplerian terms in the kernel, namely (see \cite{Palacian2002}):
\begin{equation}
-n\frac{\partial \chi_j}{\partial \lambda} + {R}_j^{(j-1)} = \bar{{{R}}}^{(j-1)}_{j,\lambda},
\label{releghe}
\end{equation}
so that
\begin{equation}
\chi_j = \frac{1}{n}\int  ({R}_j^{(j-1)}-\bar{{{R}}}^{(j-1)}_{j,\lambda})d\lambda, \qquad d\lambda = \frac{r}{a}du,
\label{chi_releg}
\end{equation}
where $\bar{{{R}}}^{(j-1)}_{j,\lambda}$ is the average of ${R}_j^{(j-1)}$ over $\lambda$. Equation \eqref{chi_releg}, now, can be solved in  `closed-form' (see \cite{Palacian2002}), i.e. without expanding the function ${R}_j^{(j-1)}$ in the eccentricities $e$, $e_P$. However, it is easy to see that, by applying the Lie transformation, the new Hamiltonian contains terms due to the contribute $\{n_P I_P,\chi_j\}$ of the type
\[
{R}_{j,1}^{(j,0)}=\delta	f_{\bm{k}}(\Lambda,\Gamma,\Theta)\cos({k_1u+k_2f_P+k_3\gamma+k_4\theta}), \qquad \delta = \frac{k_2 n_P}{k_1 n}
\]
whose size may not be sufficiently small, i.e. comparable to the size of the next term to normalize, namely $R_{j+1}^{(j)}$. As a consequence, the iterative normalization process must be adjusted: additional iterations must be added to handle first the terms coming from $\{n_P I_P,\chi_j\}$, before dealing with $R_{j+1}^{(j)}$. This leads also to remainder terms depending on increasing powers of $\delta$, whose size decreases, provided that $\delta<1$. Generating functions $\chi_{j}^{k}$, $k=1\dots m$, satisfying  
\[
-n\frac{\partial \chi_j^k}{\partial \lambda} + {{R}}_{j,k}^{(j,k-1)} = \bar{{{R}}}^{(j,k-1)}_{j,k,\lambda},
\]
are iteratively computed to normalize the contributions ${R}_{j,k}^{(j,k-1)}$ depending on $\delta^k$. After $m$ steps, the final remainder ${{R}}_{j,m+1}^{(j,m)}$ will depend on $\delta^{m+1}$ and will have a size comparable to $R_{j+1}^{(j)}$.

While the relegation technique successfully remedies the issue of the poor convergence of series depending on powers of the orbital eccentricities (see \cite{SanJuan2004,Ceccaroni2013,Feng2015,Palacian2006} for applications), the practical applicability of the technique is severely limited by the requirement of smallness of the ratio $n_P/n$. To understand this, 
let us use the example of a two degree-of-freedom harmonic oscillator system in action-angle variables as proposed in \cite{Segerman2000}:
\[
\mathcal{H}^{(0)}= \omega_1 J_1+\omega_2 J_2 + \epsilon  \cos(k_1\phi_1+k_2\phi_2),
\]
with $J_1,J_2$ the actions and $\phi_1,\phi_2$ the angles. The homological equation to solve is
\[
\{\omega_1 J_1+\omega_2 J_2,\chi_C\}+\epsilon \cos(k_1\phi_1+k_2\phi_2)=0,
\]
where $\epsilon$ is a formal parameter to assess the size of the terms. 
The classical solution is 
\begin{equation}
\chi_C = \epsilon \frac{\sin(k_1\phi_1+k_2\phi_2)}{k_1\omega_1+k_2\omega_2}.
\label{chiCrel}
\end{equation}
However, in the case $\omega_1/\omega_2<<1$ we can apply the relegation technique: we can neglect $\omega_1J_1$ in the leading term and determine a generating function $\chi_R$ satisfying
\[
-\omega_2\frac{\partial \chi_R}{\partial \phi_2}+\epsilon \cos(k_1\phi_1+k_2\phi_2)=0,
\]
that is
\[\chi_R = \frac{\epsilon}{\omega_2 k_2}\sin(k_1\phi_1+k_2\phi_2). 
\]
Because of the term $\omega_1J_1$ in the Hamiltonian, the Lie transformation gives
\[
\mathcal{H}^{(1)}=\exp(\mathcal{L}_{\chi_R})\mathcal{H}=\omega_1 J_1+\omega_2 J_2-\frac{k_1\omega_1}{k_2\omega_2}\epsilon\cos(k_1\phi_1+k_2\phi_2).
\]

If $k_1\omega_1/k_2\omega_2<1$ the remainder size is lower than the size of the normalized term. However, it may be higher than the targeted size, so that the relegation process must be further iterated. As discussed in \cite{Segerman2000}, the iterations produce the generating function 
\begin{equation}
\chi_R = \epsilon \sin(k_1\phi_1+k_2\phi_2)\frac{1}{k_2\omega_2}\Bigg(1-\frac{k_1\omega_1}{k_2\omega_2}+\Big(\frac{k_1\omega_1}{k_2\omega_2}\Big)^2+..\Bigg).
\label{chiRrelegation}
\end{equation}
It is, now, trivial to see that the generating function $\chi_R$, obtained by relegation, corresponds to the series expansion of the usual generating function $\chi_C$, obtained without relegation, in powers of the ratio $k_1\omega_1/k_2\omega_2$. However, it is obvious that, even if $\omega_1<<\omega_2$, the method may not converge if the coefficients $k_1,k_2$ are such that $k_1\omega_1/k_2\omega_2\ge1$. We refer to \cite{Sansottera2017} for more details about the convergence of the relegation algorithm. 

\paragraph{}
Some methods alternative to relegation have been proposed in literature to solve the homological equation in closed form. In \cite{Mahajan2018}, a technique based on the method of characteristics is developed; its application is shown in \cite{Mahajan2019}. In \cite{Lara2013}, the homological equation is solved in closed form for orbits with low eccentricity by accepting a remainder of small size depending on $e$. 

\paragraph{}
All the above techniques were applied, so far, in the so-called  `satellite problem', i.e. the motion of a test body in the multiple expansion of a planet's gravitational potential (e.g. with the $J_2$ and $C_{22}$ terms). In the present paper, we examine, instead, the applicability of a closed-form normalization method in the framework of the R3BP suitable for orbits with relatively high eccentricities and not using relegation. Our method is similar in spirit to the one introduced in \cite{Lara2013} for satellite motions in the geopotential. In particular, after a multipole expansion of the initial disturbing function, we introduce a book-keeping symbol $\epsilon$ (with numerical value equal to $1$), and write the initial Hamiltonian as 
\[
\mathcal{H}^{(0)} = Z_0 + \sum_{s=s_0}^{+\infty}\epsilon^sR_s^{(0)},
\]
where we have
\begin{equation}
s_0 =\Bigg\lceil \frac{\log\big(\frac{m_P}{\mathcal{M}}\big)}{\log (e)}\Bigg\rceil.
\label{s0def}
\end{equation}
The exponent of the book-keeping parameter $\epsilon$ in each perturbing term $R_s^{(0)}$ keeps track of the order of smallness of the term, which, in turn, may depend on one of more of the following three small quantities: $e$, $e_P$ and the ratio between the planet and the Sun's masses. As in \cite{Lara2013}, to overcome the difficulty of solving the homological equation in closed form the main idea is to accept a remainder coming from the homological equation itself; at each $j$-th iteration, $j=1\dots r$, we determine a  generating function $\chi_{s_0+j-1}^{(j)}$ satisfying 
\[
\Big\{Z_0,\chi_{s_0+j-1}^{(j)}\}+\epsilon^{s_0+j-1}R_{s_0+j-1}^{(j-1)}=\epsilon^{s_0+j-1}Z_{s_0+j-1}+\mathcal{O}(\epsilon^{s_o+j})
\] 
where $Z_{s_0+j-1}$ does not depend on $\lambda$ and $\lambda_P$. The new Hamiltonian is
\[
\mathcal{H}^{(j)}=\exp(\mathcal{L}_{\chi_{s_0+j-1}^{(j)}})\mathcal{H}^{(j-1)} = Z_0 + \sum_{s=s_0}^{s_0+j-1}\epsilon^sZ_s + \sum_{s=s_0+j}^{+\infty}\epsilon^sR^{(j)}_s
\]
where $R^{(j)}_s$ contains also the remainder contributions coming from the homological equation.
\paragraph{}
The structure of the paper is as follows. The method will be detailed in Section \ref{section: method}. In Section \ref{section: outcomesPCR3BP}, we apply the method and give numerical results for the simplest case of the planar circular restricted three-body problem (PCR3BP); an analysis of the results is performed to assess the validity of the method. In the present study, we consider Jupiter as the perturbing planet and a main belt asteroid as the test particle. In Section \ref{section: outcomesR3BP} we report the outcomes obtained by applying, instead, the method to some orbits in the more general planar elliptic R3BP.

\section{Normalization Method}
\label{section: method}
In this section, we describe the formal steps required to apply the proposed closed-form normalization method. They include the preparation of the initial Hamiltonian, the choice of the book-keeping scheme, the definitions related to the used Poisson structure as well as the normalization process through the composition of Lie series. 

\subsection{Hamiltonian preparation}
Let us consider a heliocentric inertial reference frame with the $\widehat x$ axis pointing towards the planet's perihelion and the $\widehat z$ axis parallel to the planet's orbital angular momentum. The Hamiltonian of the R3BP is 
\begin{equation}
H = \frac{p^2}{2} - \frac{\mathcal{GM}}{r} + \mathcal{R}
\label{Hr3bp}
\end{equation}
where $\br$ is the particle's heliocentric position vector, $r=\vert \br \vert$, and $\bp$ is the conjugated canonical momenta vector, with $p=\vert \bp \vert$. In equation \eqref{Hr3bp}, $\mathcal{R}$ is the perturbing planet's tidal potential equal to
\begin{equation*}
\mathcal{R}=-\mu\big(\frac{1}{\sqrt{r^2+r_P^2-2\br	\cdot \brp}}-\frac{\br\cdot\brp}{r_P^3}\big)
\end{equation*}
with $\brp$ the position vector of the planet, $r_P=\vert \brp \vert$. 

\paragraph{1 - Multipolar Expansion}
We are interested in analysing the motion of small bodies orbiting the Sun for which we always have $r<r_P$. Then, the function $\mathcal{R}$ can be approximated with its truncated multipole expansion: 
\begin{equation}
\mathcal{R} \simeq \mathsf{R} = -\frac{\mu}{r_P}\sum_{j=2}^{o}\frac{r^j}{r_P^j} P_j(\cos \alpha),
\label{tidalpotentialme}
\end{equation}
where
\[
\cos \alpha = \frac{\bm{r}\cdot\bm{r}_P}{rr_P}
\]
and $P_j(\cdot)$ are Legendre polynomials. The time-dependent term $1/r_P$ is omitted in \eqref{tidalpotentialme} since it does not contribute to the particle's equations of motion.

\paragraph{2 - Extended Hamiltonian} 
The Hamiltonian \eqref{Hr3bp} can be expressed as a function of orbital elements, using the relations 
\begin{equation}
\brp = r_P\big(\cos(f_P(t)), \sin(f_P(t)), 0\big)^T, \qquad r_P = \frac{a_P\eta_P^2}{1+e_P\cos(f_P)}
\label{rPdef}
\end{equation}

\[
\br =r\Bigg(\begin{matrix} \cos(\Omega)\cos(\omega+f) - \sin(\Omega)\cos(i)\sin(\omega+f)\\
\sin(\Omega)\cos(\omega+f) + \cos(\Omega)\cos(i)\sin(\omega+f)\\
\sin(i)\sin(\omega+f)
\end{matrix}\Bigg), \qquad r =  \frac{a\eta^2}{1+e\cos(f)},
\]
 where
\[
\eta=\sqrt{1-e^2},
\]
and $f$ is the true anomaly. To avoid trigonometric functions at the denominator in $\mathsf{R}$, it turns convenient to introduce the eccentric anomaly $u$ in place of $f$ through the relations
\begin{equation}
\cos{f} =\frac{a}{r}(\cos{u}-e), \qquad \sin{f}=\frac{a}{r}\sin{u},\qquad
r=a(1-e\cos{u}).
\label{ratior}
\end{equation}
The planet orbit is assumed Keplerian, so that only the true anomaly $f_P$ varies in time. The variable $f_P$ depends on time through the orbit's mean longitude $\lambda_P$. However, the Hamiltonian can be formally extended to an autonomous one by adding a term depending on a dummy action $I_P$ conjugated to the angle $\lambda_P$. The extended Hamiltonian is 
\[
\mathsf{H}=  - \frac{\mathcal{GM}}{2a} +n_PI_P + \mathsf{R}(a,e,i,\omega,\Omega,u,f_P;a_P,e_P)
\] 
where $n_P$ is the planet's mean motion. The dependence of $\mathsf{H}$ on the modified Delaunay variables $(\Lambda,\Gamma,\Theta,\lambda,\gamma,\theta)$ is implicit, through the orbital elements, and the dependence on $\lambda_P$ is also implicit, through $f_P$.

\paragraph{3 - Expansion of the semi-major axis}
A key element of our proposed method is the following: for algorithmic convenience purposes, it turns out quite useful to have constant frequencies appearing at the kernel of the homological equation to be solved at successive normalization steps. This can be achieved in the following way: recalling that
\[
a = \frac{\Lambda^2}{\mathcal{GM}}, 
\]
the particle's semi-major axis $a$ can be expanded as
\begin{equation}
	a = \aV  + \frac{2}{n^*\aV}\delta \Lambda + \dots , \quad \mbox{with} \quad n^*= \sqrt{\frac{\mathcal{GM}}{{\aV}^{3}}}.
	\label{smaexp}
\end{equation}
Then, the Keplerian term in the Hamiltonian becomes: 
\begin{equation*}
-\frac{\mathcal{GM}}{2a} = -\frac{\mathcal{GM}}{2\aV} +n^*\delta \Lambda-\frac{3}{2}\frac{\delta \Lambda^2}{{\aV}^2}+\dots,
\end{equation*}
where, the term constant in $\delta \Lambda$ can be omitted. Introducing the above expansion, the Hamiltonian takes the form
\begin{equation*}
	\mathcal{H}= n_PI_P + n^*\delta \Lambda -\frac{3}{2}\frac{\delta \Lambda^2}{\aV{^2}} + \dots + \mathsf{R}(\aV + \frac{2}{n^*\aV}\delta\Lambda + \dots, e, i, \omega, \Omega, u, f_P; a_P, e_P),
\end{equation*}
with
 \begin{align*}
 & e =  \sqrt{1-\Big(1-\frac{\Gamma}{n^*{\aV}^2+\delta \Lambda}\Big)^2} & \omega = -\gamma + \theta \\
 & i = \arccos\Big(1-\frac{\Theta}{n^*{\aV}^2+\delta \Lambda -\Gamma}\Big) &\Omega=-\theta.
\label{MDdef}
\end{align*}
The angle $u$ depends on the canonical variables, $u = u(\delta \Lambda,\Gamma,\lambda,\gamma)$, through Kepler's equation
\[
M = u-e\sin(u)
\]
where $M=\lambda+\gamma$ is the mean anomaly. We note that this expansion of the Hamiltonian in powers of $\delta \Lambda$ is equivalent to the canonical transformation $\Big(\Lambda,\Gamma,\Theta,I_P,\lambda,\gamma,\theta,\lambda_P\Big)\rightarrow\Big(\delta\Lambda,\Gamma,\Theta,I_P,\lambda,\gamma,\theta,\lambda_P\Big)$.

\paragraph{4 - RM-reduction}
To the Hamiltonian found in the previous step, we perform the following operation, called   `${r}$-to the minus one' or RM-reduction, which does not alter $\mathcal{H}$: 
\begin{equation}
\mathcal{H}= n_PI_P + n^*\delta \Lambda + \Big(-\frac{3}{2}\frac{\delta \Lambda^2}{\aV{^2}} + \dots + \mathsf{R}(\aV + \frac{2}{n^*\aV}\delta \Lambda + \dots, e, i, \omega, \Omega, u, f_P; a_P, e_P)\Big)Q
\label{H3D}
\end{equation}
where 
\begin{equation}
Q = \frac{a(1-e\cos u)}{r}=\frac{\aV(1-e\cos u)}{r}+2\frac{(1-e\cos u)}{\aV n^* r}\delta \Lambda + ...=1.
\label{Qdef}
\end{equation}
The trigonometric reduction \eqref{H3D} yields a sum of trigonometric monomials $\cos(k_1u+k_2f_P+k_3\omega+k_4\Omega)$; moreover, after RM-reduction all terms in $\mathcal{H}$ appear divided by $r$ except for the terms $n_PI_P$ and $n^*\delta \Lambda$.

\subsection{Book Keeping}
A book-keeping symbol $\epsilon$, with numerical value $\epsilon=1$, is used in order to keep track of the relative size of the various terms in the Hamiltonian. 
There are four different small parameters to consider in the problem: $\mu$, $\delta\Lambda$ and the two eccentricities $e$ and $e_P$.
We adopt the following  `book-keeping rules' to assign a unique power of the symbol $\epsilon$ (reflecting the order of smallness) to each term in the Hamiltonian:
\begin{itemize}
	\item all terms depending on powers of the eccentricities $e^je_P^k$, with  $j,k\in\mathbb{Z}$, are multiplied by the book-keeping factor $\epsilon^{(j+k)}$;
	\item all terms depending on $(1+\eta)^j$ and $(1-\eta)^k$, with $j,k\in\mathbb{N}$, are multiplied by $\epsilon^0$ and $\epsilon^{2k}$ respectively; 
	\item all terms depending on $\mu^j\delta\Lambda^k$, with $j,k\in\mathbb{N}$, are multiplied by $\epsilon^{(j+k)s_0}$ with $s_0$ given in \eqref{s0def}; 
	\item all terms depending on $\delta \Lambda^k$, with $k\in\mathbb{N}$, coming from the Keplerian contribute in the Hamiltonian, are multiplied by $\epsilon^{(k-1)s_0}$; 
	\item all terms depending on $\phi^k$, with $k\in\mathbb{N}$, are multiplied by $\epsilon^k$. 
\end{itemize} 
The quantity $\phi=u-M$ is called  `equation of the center'. By Kepler's equation, we have $\phi=e\sin u$. 
After the assignment of the above book-keeping factors, the Hamiltonian is split into two main components,i.e. a leading term $Z_0$ and the disturbing function ${R}$, where 
\begin{equation}
Z_0=n^*\delta \Lambda+n_PI_P, \qquad {R}=\sum_{s=s_0}^{+\infty} \epsilon^s {R}_s^{(0)}.
\label{Z0def}
\end{equation}
To perform the above operation, and in particular to specify the value of the lowest book-keeping order $s_0$ in the perturbation, we must have an estimate of the size of $e$ along any individual trajectory: in the numerical examples below we use the initial value $e(t_0)$ for this purpose. 

\paragraph{}
In terms of the above book-keeping, the goal of the normalization becomes, now,  to define a Lie series transformation leading to a final Hamiltonian normalized up to a pre-selected order $s_m$ in the book-keeping parameter $\epsilon$. In particular, after $s_m-s_0+1$ normalization steps, the Hamiltonian will have the form:
\begin{equation*}
\mathcal{H}^{(s_m-s_0+1)}=Z(\delta \Lambda,\Gamma,\Theta,I_P,\gamma,\theta)+\sum_{s=s_m+1}^{+\infty}\epsilon^{s}R_s^{(s_m-s_0+1)}, \qquad s_m>s_0
\end{equation*}
where ${Z}$ is in normal form. All terms with book-keeping order higher than $s_m$ are considered negligible in the initial Hamiltonian. Then, the starting Hamiltonian for computing the normal form is set as:  
\begin{equation}
\mathcal{H} \sim \mathcal{H}^{(0)} = Z_0 + \sum_{s=s_0}^{s_m} \epsilon^s {R}_s^{(0)}.
\label{Rsum} 
\end{equation}
The order $s_m$ is called the maximum truncation order of the expansion. Let us remark that if we target a remainder with a size of order $\big(\frac{{m}_P}{\mathcal{M}}\big)^2$, we must impose 
\begin{equation}
	s_m=2s_0-1.
	\label{maxorder}
\end{equation}.

For a remainder of order $\big(\frac{{m}_P}{\mathcal{M}}\big)^{\ell}$, with $\ell>2$, we have, instead, $	s_m=\ell s_0-1$.

\subsection{Poisson structure}
\label{sectionPB}
All along the normalization in closed form, we need to compute Poisson brackets of the form $\{A_1,A_2\}$, where $A_1$ and $A_2$ are functions of $(\delta\Lambda,\Gamma,\Theta,I_P,\lambda,\gamma,\theta,\lambda_P)$ whose explicit expressions are given in terms of the orbital elements $(e,i,\omega,\Omega,u,f_P)$ and of the variables $r,\phi,\eta$: 

\begin{equation}
\begin{split}
A_{1,2}=&A(\delta \Lambda,e(\delta \Lambda,\Gamma),\eta(\delta\Lambda,\Gamma),i(\delta \Lambda,\Gamma,\Theta),\omega(\gamma,\theta),\Omega(\theta),\\ &u(\delta \Lambda,\Gamma,\lambda,\gamma),\phi(\delta\Lambda,\Gamma,\lambda,\gamma),r(\delta\Lambda,\Gamma,\lambda,\gamma),I_P,f_P(\lambda_P); \aV,a_P,e_P).
\end{split}
\label{formPBF}
\end{equation} 
To compute $\{A_1,A_2\}$ we use the formula
\begin{equation}
\begin{split}
\{A_1,A_2\}= & \frac{\partial A_1}{\partial \lambda}\frac{\partial A_2}{\partial \delta \Lambda}-\frac{\partial A}{\partial \delta \Lambda}\frac{\partial A_2}{\partial \lambda}+\frac{\partial A_1}{\partial \gamma}\frac{\partial A_2}{\partial \Gamma}-\frac{\partial A_1}{\partial \Gamma}\frac{\partial A_2}{\partial \gamma}+\frac{\partial A_1}{\partial \theta}\frac{\partial A_2}{\partial \Theta}-\frac{\partial A_1}{\partial \Theta}\frac{\partial A_2}{\partial \theta}\\
& +\Big(\frac{\partial A_1}{\partial \lambda_P}\frac{\partial A_2}{\partial I_P}-\frac{\partial A_1}{\partial I_P}\frac{\partial A_2}{\partial \lambda_P}\Big)\Big(\frac{\aV(1-e\cos u)}{r}+\mathcal{O}(\epsilon^{s_0}\delta \Lambda)\Big),
\end{split}
\label{pbdef}
\end{equation}
where the following formulas are applied for any $A=A_{1,2}$: 
\[
\frac{\partial A}{\partial \lambda} = \frac{\partial A}{\partial u}\frac{\partial u}{\partial \lambda}+\frac{\partial A}{\partial r}\frac{\partial r}{\partial \lambda}+\frac{\partial A}{\partial \phi}\frac{\partial \phi}{\partial \lambda}\epsilon^{-1},
\]
\[
\begin{split}
\frac{\partial A}{\partial \delta \Lambda} = & \frac{\partial A}{\partial \delta \Lambda}+\frac{\partial A}{\partial e} \frac{\partial e}{\partial \delta \Lambda}\epsilon^{-1}+\frac{\partial A}{\partial \eta}\frac{\partial \eta}{\partial \delta \Lambda}+\frac{\partial A}{\partial \cos i}\frac{\partial \cos i}{\partial \delta \Lambda}+\frac{\partial A}{\partial \sin i}\frac{\partial \sin i}{\partial \delta \Lambda}+\frac{\partial A}{\partial u}\frac{\partial u}{\partial \delta \Lambda}+\\ &\frac{\partial A}{\partial r}\frac{\partial r}{\partial \delta \Lambda}+\frac{\partial A}{\partial \phi}\frac{\partial \phi}{\partial \delta \Lambda}\epsilon^{-1},
\end{split}
\]
\[
\frac{\partial A}{\partial \gamma} = \frac{\partial A}{\partial \omega}\frac{\partial \omega}{\partial \gamma} + \frac{\partial A}{\partial M}\frac{\partial M}{\partial \gamma} +\frac{\partial A}{\partial u}\frac{\partial u}{\partial \gamma}+\frac{\partial A}{\partial r}\frac{\partial r}{\partial \gamma}+\frac{\partial A}{\partial \phi}\frac{\partial \phi}{\partial \gamma}\epsilon^{-1}
\]
\[
\begin{split}
\frac{\partial A}{\partial 	\Gamma} = & \frac{\partial A}{\partial u}\frac{\partial u}{\partial \Gamma}+\frac{\partial A}{\partial r}\frac{\partial r}{\partial \Gamma}+\frac{\partial A}{\partial \phi}\frac{\partial \phi}{\partial \Gamma}\epsilon^{-1}+\frac{\partial A}{\partial e}\frac{\partial e}{\partial \Gamma}\epsilon^{-1}+\frac{\partial A}{\partial \eta}\frac{\partial \eta}{\partial \Gamma}\\ &+ \frac{\partial A}{\partial \cos(i)}\frac{\partial \cos(i)}{\partial \Gamma}+\frac{\partial A}{\partial \sin(i)}\frac{\partial \sin(i)}{\partial \Gamma},
\end{split}
\]
\[
\frac{\partial A}{\partial \theta} = \frac{\partial A}{\partial \Omega}	\frac{\partial\Omega}{\partial\theta}+\frac{\partial A}{\partial \omega}	\frac{\partial\omega}{\partial\theta},
\]
\[
\frac{\partial A}{\partial \Theta} = \frac{\partial A}{\partial \cos(i)}\frac{\partial \cos(i)}{\partial \Theta}+\frac{\partial A}{\partial \sin(i)}\frac{\partial \sin(i)}{\partial \Theta},
\]
\[
\frac{\partial A}{\partial \lambda_P} = \frac{\partial A}{\partial f_P}\frac{\partial f_P}{\partial \lambda_P}. 
\]
The partial derivatives in the formulas above are:
\[
\begin{split}
\frac{\partial e}{\partial \delta \Lambda}= -\frac{\eta e}{(1+\eta)n^*{\aV}^2}\epsilon +\mathcal{O}(\epsilon^{s_0}\delta \Lambda),  &\quad \frac{\partial e}{\partial \Gamma}=-\frac{\eta}{a^{*^2} n^* e}	\epsilon^{-1} +\mathcal{O}(\epsilon^{s_0}\delta\Lambda),\\
\frac{\partial \eta}{\partial \delta \Lambda}= \frac{1-\eta}{n^*{\aV}^2}\epsilon^2+\mathcal{O}(\epsilon^{s_0}\delta\Lambda), &\quad \frac{\partial \eta}{\partial \Gamma}=\frac{1}{a^{*^2} n^*}+\mathcal{O}(\epsilon^{s_0}\delta\Lambda),\\
\frac{\partial \cos i}{\partial \delta \Lambda}=\frac{1-\cos i}{a^{*^2} n^* \eta} +\mathcal{O}(\epsilon^{s_0}\delta), &\quad
\frac{\partial \sin i}{\partial \delta \Lambda}=-\frac{\cos i (1-\cos i)}{a^{*^2} n^* \eta \sin(i)} +\mathcal{O}(\epsilon^{s_0}\delta\Lambda),\\
\frac{\partial \cos i}{\partial \Gamma}=\frac{\cos i-1}{a^{*^2} n^* \eta} +\mathcal{O}(\epsilon^{s_0}\delta\Lambda), 
&\quad \frac{\partial \sin i}{\partial \Gamma}=\frac{\cos i (1-\cos i)}{a^{*^2} n^* \eta \sin(i)} +\mathcal{O}(\epsilon^{s_0}\delta\Lambda),\\
\frac{\partial  \cos i}{\partial \Theta}=-\frac{1}{a^{*^2} n^* \eta} +\mathcal{O}(\epsilon^{s_0}\delta\Lambda), &\quad \frac{\partial  \sin i}{\partial 	\Theta}=\frac{\cos i}{a^{*^2} n^* \eta \sin i} +\mathcal{O}(\epsilon^{s_0}\delta\Lambda),\\
\frac{\partial u}{\partial 	\lambda}=\frac{a^*}{r}+\mathcal{O}(\epsilon^{s_0}\delta\Lambda), &\quad \frac{\partial u}{\partial 	\gamma}=\frac{a^*}{r}+\mathcal{O}(\epsilon^{s_0}\delta\Lambda),\\
\frac{\partial u}{\partial \delta \Lambda}=\frac{\eta e\sin u}{r(1+\eta)n^*{\aV}}\epsilon+\mathcal{O}(\epsilon^{s_0}\delta\Lambda), & \quad \frac{\partial u}{\partial \Gamma}=\frac{\eta\sin(u)}{a^*n^*er}\epsilon^{-1}+\mathcal{O}(\epsilon^{s_0}\delta\Lambda),\\
\frac{\partial \phi}{\partial \lambda}=\frac{a^*}{r}-1+\mathcal{O}(\epsilon^{s_0}\delta\Lambda), & \quad \frac{\partial \phi}{\partial \gamma}= \frac{a^*e\cos u}{r}\epsilon+\mathcal{O}(\epsilon^{s_0}\delta\Lambda),\\
 \frac{\partial \phi}{\partial \delta \Lambda}=\frac{\eta e\sin u}{r(1+\eta)n^*{\aV}}\epsilon+\mathcal{O}(\epsilon^{s_0}\delta\Lambda), & \quad \frac{\partial \phi}{\partial \Gamma}=\frac{\eta\sin(u)}{a^*n^*er}\epsilon^{-1}+\mathcal{O}(\epsilon^{s_0}\delta\Lambda), \\
 \frac{\partial r}{\partial \lambda}=\frac{{\aV}^2e\sin(u)}{r}\epsilon +\mathcal{O}(\epsilon^{s_0}\delta\Lambda), & \quad \frac{\partial r}{\partial \gamma}=\frac{{\aV}^2e\sin(u)}{r}\epsilon+\mathcal{O}(\epsilon^{s_0}\delta\Lambda),\\
  \frac{\partial r}{\partial \delta \Lambda}= \frac{\eta e\cos u}{(1+\eta)n^*{\aV}}\epsilon+\mathcal{O}(\epsilon^{s_0}\delta\Lambda), & \quad \frac{\partial r}{\partial \Gamma}=\frac{\eta(e\epsilon-cos(u))}{n^*er}\epsilon^{-1}+\mathcal{O}(\epsilon^{s_0}\delta\Lambda),\\
  \frac{\partial \omega}{\partial \theta} = 1, &\quad \frac{\partial \omega}{\partial \gamma} = -1,\\
  \frac{\partial \Omega}{\partial \theta} = -1, &\quad
  \frac{\partial f_P}{\partial \lambda_P}=1+\frac{2e_P\cos(f_P)}{\eta_P^3}\epsilon+\Big(\frac{1}{\eta_P^3}-1\\ & \hspace{10mm}+\frac{e_P^2\cos^2(f_P)}{\eta_P^3}\Big)\epsilon^2.\\
\end{split}
\]

Whenever needed, the higher order terms $\mathcal{O}(\epsilon^{s_0}\delta\Lambda)$ in the above formulas are easy to obtain by computer algebra. 

\paragraph{}
In order to allow for various simplifications during the normalization process, the previous expressions are to be implemented in the manipulator in the exact form given above. Note also the explicit appearance of the book-keeping parameter $\epsilon$ in all expressions of the partial derivatives, which depend on  $e$, $e_P$ and $\delta \Lambda$. This is an essential element of the method: supposing that $A_1=\tilde{A}_1\epsilon^j$ and $A_2=\tilde{A}_2\epsilon^k$, $j,k\ge s_0$, the result of $\{A_1,A_2\}$ will not be of order $j+k$ in $\epsilon$, but will contain several terms with different powers of $\epsilon$. In particular, we have the following

\begin{proposition}
	Given two functions $A_1=\tilde{A}_1\epsilon^j$ and $A_2=\tilde{A}_2\epsilon^k$ with $\tilde{A}_1$ and $\tilde{A}_2$ in the form \eqref{formPBF}, the Poisson bracket $\{A_1,A_2\}$ generates terms whose minimum order in $\epsilon$ is equal to 
	\begin{itemize}
		\item $j+k-2$ if $j,k> s_0$; 
		\item $j+k-1$ if either $j=s_0$, $k>s_0$ or $j>s_0$, $k=s_0$; 		
		\item $j+k$ if $j=k=s_0$.
	\end{itemize} 
\label{pboutcomesorder}
\end{proposition}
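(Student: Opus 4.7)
The plan is to perform a direct combinatorial analysis of the explicit $\epsilon$-orders produced when the Poisson bracket \eqref{pbdef} is expanded using the chain rules stated above. First I would decompose $\{A_1,A_2\}$ into the seven contributions coming from each canonically conjugate pair, and then expand each partial derivative $\partial A_i/\partial x$ with the chain-rule formulas, so that every resulting monomial is a product of three labelled factors: a partial of $A_i$ with respect to one of the intermediate variables listed in \eqref{formPBF}, a partial of that intermediate variable with respect to $x$, and, whenever the differentiation passes through $\partial/\partial e$ or $\partial/\partial\phi$, the explicit $\epsilon^{-1}$ correction appended by the chain rule.

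Next, inspecting the tables of derivatives of intermediate variables, I would establish the key bookkeeping lemma: the only chain-rule branches capable of lowering the explicit $\epsilon$-exponent of $\partial A_i/\partial x$ below $j$ are precisely the $\partial/\partial e$ and $\partial/\partial\phi$ branches, and each such branch lowers the exponent by exactly one. All other branches either carry a nonnegative explicit power of $\epsilon$ and no chain-rule correction, or carry an $\epsilon^{-1}$ (as in $\partial u/\partial\Gamma$, $\partial r/\partial\Gamma$, $\partial\phi/\partial\Gamma$, $\partial e/\partial\Gamma$) that is tied to a $1/e$ factor whose singular contribution is systematically matched by a compensating $e$ coming from the paired factor of the bracket (for instance from the explicit $\epsilon$ attached to $\partial r/\partial\lambda$, $\partial r/\partial\gamma$, $\partial u/\partial\delta\Lambda$, $\partial r/\partial\delta\Lambda$). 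Consequently, each $A_i$ contributes at most one $\epsilon^{-1}$ saving to any monomial of the Poisson bracket.

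The core observation is that a function with book-keeping order exactly $s_0$ must, by the book-keeping rules, be proportional to $\mu$ and free of the factors $e$, $e_P$, $\phi$ and $(1-\eta)^k$ for $k\ge 1$, because each of those factors carries a strictly positive book-keeping weight that would force the total order above $s_0$. In particular $\partial A_i/\partial e=\partial A_i/\partial\phi=0$ for such an $A_i$, so the $\epsilon^{-1}$-yielding chain-rule branches on the $A_i$ side vanish identically and no saving is attainable from that side.

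Combining these two facts and summing the explicit $\epsilon$-exponents of the two factors of a Poisson-bracket term, the minimum achievable order is $j+k$ reduced by the number of the two functions whose book-keeping order exceeds $s_0$: $j+k-2$ when both $j,k>s_0$, $j+k-1$ when exactly one of them equals $s_0$, and $j+k$ when $j=k=s_0$. The main obstacle I expect is not the counting itself but the verification claimed in the second paragraph, namely tracking carefully that the $1/e$ contributions coming from $\partial u/\partial\Gamma$, $\partial r/\partial\Gamma$, $\partial\phi/\partial\Gamma$, $\partial e/\partial\Gamma$ indeed pair with compensating $e$ factors from the paired derivative so that they do not produce additional reductions beyond the two $\epsilon^{-1}$ allowed by differentiation through $e$ or $\phi$.
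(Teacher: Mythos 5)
Your overall strategy --- expand the bracket pair by pair, push the chain rules through, and track which branches carry negative explicit powers of $\epsilon$ --- is the same as the paper's, and your treatment of the first bullet reaches the right conclusion. But there is a genuine gap in the second and third bullets, and it traces back to a misstatement in your bookkeeping lemma. The order reduction in $\{A_1,A_2\}$ is not symmetric, ``one saving per function'': it is supplied entirely by the $\Gamma$-differentiated factor. In the paper's accounting $\partial A/\partial\Gamma$ has a part of order $\epsilon^{l-2}$ (the $e$ and $\phi$ branches, which lower by \emph{two}, not one) and a part of order $\epsilon^{l-1}$ (the $u$ and $r$ branches, which your lemma omits entirely), while $\partial A/\partial\gamma$ is of order $\epsilon^{l}$. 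Hence the dangerous term when $j=s_0<k$ is
\[
\Big(\frac{\partial A_1}{\partial u}\frac{\partial u}{\partial \gamma}+\frac{\partial A_1}{\partial \omega}\frac{\partial \omega}{\partial \gamma}\Big)\Big(\frac{\partial A_2}{\partial e}\frac{\partial e}{\partial \Gamma}+\frac{\partial A_2}{\partial \phi}\frac{\partial \phi}{\partial \Gamma}\Big),
\]
which is nominally of order $\epsilon^{j}\cdot\epsilon^{k-2}=\epsilon^{j+k-2}$ even though $\partial A_1/\partial e=\partial A_1/\partial\phi=0$: the full saving of two orders comes from $A_2$ alone, so killing the $e$ and $\phi$ branches on the $A_1$ side does not dispose of it. As written, your argument proves nothing better than $j+k-2$ in the mixed case (and nothing better than $j+k-1$ when $j=k=s_0$).

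The missing ingredient is the D'Alembert structure of the Hamiltonian (Lemma 1 of the paper). Because a term of book-keeping order exactly $s_0$ carries no power of $e$, the rule $q\ge|m_3|$ forces $m_3=0$, so $A_1$ is either independent of $u$ and $\omega$ or of the form $f(\eta,i,r)\cos(k_1u+k_2f_P+k_3\omega+k_4\Omega)$ with $k_3=k_1$. Only then does one obtain
\[
\frac{\partial A_1}{\partial u}\frac{\partial u}{\partial \gamma}+\frac{\partial A_1}{\partial \omega}\frac{\partial \omega}{\partial \gamma}\;\propto\;k_1\frac{\aV}{r}-k_3=\frac{k_1\aV e\cos u}{r}\,\epsilon+\mathcal{O}(\epsilon^{s_0}\delta\Lambda)=\mathcal{O}(\epsilon^{j+1}),
\]
which pushes the offending product up to $\epsilon^{j+k-1}$; the same cancellation, applied to both factors together with the $\epsilon^{l-1}$ bound on the $u,r$ branches of $\partial/\partial\Gamma$, is what yields $j+k$ when $j=k=s_0$. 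Without invoking the D'Alembert rules (or an equivalent structural fact about order-$s_0$ terms), the second and third bullets cannot be obtained by pure $\epsilon$-counting.
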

\begin{proof}
	We analyse the powers in $\epsilon$ of the various terms produced in the Poisson bracket $\{A_1,A_2\}$. The partial derivatives with respect to $\theta$ and $\Theta$ do not introduce any order variation, thus 
	\[
	\frac{\partial A_1}{\partial \theta}\frac{\partial A_2}{\partial \Theta}-\frac{\partial A_1}{\partial \Theta}\frac{\partial A_2}{\partial \theta}\sim\epsilon^{j+k}. 
	\] 
	Similarly, 
	\[
	\frac{\partial A_1}{\partial \lambda_P}\frac{\partial A_2}{\partial I_P}-\frac{\partial A_1}{\partial I_P}\frac{\partial A_2}{\partial \lambda_P}\sim\epsilon^{j+k}.
	\] 
	In fact, the partial derivatives with respect to $\lambda_P$ generate terms of orders $j+k$ or higher. The same holds true for
	\[
	\frac{\partial A_1}{\partial \lambda}\frac{\partial A_2}{\partial \delta \Lambda}-\frac{\partial A_1}{\partial \delta\Lambda}\frac{\partial A_2}{\partial \lambda}\sim \epsilon^{k+j}
	\] 
	
	The only part of $\{A_1,A_2\}$ which produces terms of order lower than $j+k$ is  
	\[
	\frac{\partial A_1}{\partial \gamma}\frac{\partial A_2}{\partial \Gamma}-\frac{\partial A_1}{\partial \Gamma}\frac{\partial A_2}{\partial \gamma}. 
	\] 
	
	We have 
	\[
	\frac{\partial A_{m}}{\partial \gamma} = 	\frac{\partial A_{m}}{\partial u}\frac{\partial u}{\partial \gamma}+\frac{\partial A_{m}}{\partial \omega}\frac{\partial \omega}{\partial \gamma} + \frac{\partial A_{m}}{\partial \phi}\frac{\partial \phi}{\partial \gamma}+ \frac{\partial A_{m}}{\partial r}\frac{\partial r}{\partial \gamma}, 
	\]
	where 
	\[
	\frac{\partial A_{m}}{\partial u}\frac{\partial u}{\partial \gamma}+\frac{\partial A_{m}}{\partial \omega}\frac{\partial \omega}{\partial \gamma}+\frac{\partial A_{m}}{\partial \phi}\frac{\partial \phi}{\partial \gamma}\sim \epsilon^{l}, \quad  \frac{\partial A_{m}}{\partial r}\frac{\partial r}{\partial \gamma}\sim \epsilon^{l+1}, \]
	and 
	\[
	\frac{\partial A_{m}}{\partial \Gamma} = \frac{\partial A_{m}}{\partial e}\frac{\partial e}{\partial \Gamma}+	\frac{\partial A_{m}}{\partial u}\frac{\partial u}{\partial \Gamma} + \frac{\partial A_{m}}{\partial r}\frac{\partial r}{\partial \Gamma} + \frac{\partial A_{m}}{\partial \cos i}\frac{\partial \cos i}{\partial \Gamma}+\frac{\partial A_{m}}{\partial \sin i}\frac{\partial \sin i}{\partial \Gamma},
	\]
	where 
	\[
	 \frac{\partial A_{m}}{\partial e}\frac{\partial e}{\partial \Gamma} + \frac{\partial A_{m}}{\partial \phi}\frac{\partial \phi}{\partial \Gamma}\sim \epsilon^{l-2}, \hspace{1mm}\frac{\partial A_{m}}{\partial u}\frac{\partial u}{\partial \Gamma} + \frac{\partial A_{m}}{\partial r}\frac{\partial r}{\partial \Gamma}\sim \epsilon^{l-1}, \hspace{1mm} \frac{\partial A_{m}}{\partial \cos i}\frac{\partial \cos i}{\partial \Gamma}+\frac{\partial A_{m}}{\partial \sin i}\frac{\partial \sin i}{\partial \Gamma}\sim \epsilon^l
	\]
	with 
	\[
	m=1,2, \quad l= \left\{ \begin{array}{rcl}
	j & \mbox{for}
	& m=1 \\ k & \mbox{for} & m=2 \\
	\end{array}\right. .
	\]
	Then, the quantity
	\[
	\begin{split}
	&\Big(\frac{\partial A_1}{\partial u}\frac{\partial u}{\partial \gamma}+\frac{\partial A_1}{\partial \omega}\frac{\partial \omega}{\partial \gamma}+\frac{\partial A_1}{\partial \phi}\frac{\partial \phi}{\partial \gamma}\Big)\Big(\frac{\partial A_2}{\partial e}\frac{\partial e }{\partial \Gamma}+\frac{\partial A_{2}}{\partial \phi}\frac{\partial \phi}{\partial \Gamma}\Big)\\ & -\Big(\frac{\partial A_2}{\partial u}\frac{\partial u}{\partial \gamma}+\frac{\partial A_2}{\partial \omega}\frac{\partial \omega}{\partial \gamma}+\frac{\partial A_2}{\partial \phi}\frac{\partial \phi}{\partial \gamma}\Big)\Big(\frac{\partial A_1}{\partial e}\frac{\partial e}{\partial \Gamma}+\frac{\partial A_{1}}{\partial \phi}\frac{\partial \phi}{\partial \Gamma}\Big)
	\end{split}
	\]
	generates terms of order $j+k-2$, while the quantity
	\[
	\begin{split}
	&\Big(\frac{\partial A_1}{\partial u}\frac{\partial u}{\partial \gamma}+\frac{\partial A_1}{\partial \omega}\frac{\partial \omega}{\partial \gamma}+\frac{\partial A_1}{\partial \phi}\frac{\partial \phi}{\partial \gamma}\Big)\Big(\frac{\partial A_2}{\partial u}\frac{\partial u}{\partial \Gamma}+\frac{\partial A_2}{\partial r}\frac{\partial r}{\partial \Gamma}\Big)+\frac{\partial A_1}{\partial r}\frac{\partial r}{\partial \gamma}\Big(	
	\frac{\partial A_2}{\partial e}\frac{\partial e }{\partial \Gamma}+	\frac{\partial A_2}{\partial \phi}\frac{\partial \phi}{\partial \Gamma}\Big) \\ &-\Big(\frac{\partial A_2}{\partial u}\frac{\partial u}{\partial \gamma}+\frac{\partial A_2}{\partial \omega}\frac{\partial \omega}{\partial \gamma}+\frac{\partial A_2}{\partial \phi}\frac{\partial \phi}{\partial \gamma}\Big)\Big(\frac{\partial A_1}{\partial u}\frac{\partial u}{\partial \Gamma}+\frac{\partial A_1}{\partial r}\frac{\partial r}{\partial \Gamma}\Big)-\frac{\partial A_2}{\partial r}\frac{\partial r}{\partial \gamma}\Big(	
	\frac{\partial A_1}{\partial e}\frac{\partial e }{\partial \Gamma}+\frac{\partial A_1}{\partial \phi}\frac{\partial \phi }{\partial \Gamma}\Big)
	\end{split}
	\] 
	generates terms of order $j+k-1$.
	\paragraph{}
	If both $j,k>s_0$, the minimum book-keeping order encountered in the above derivatives is $j+k-2$ and the proof of Proposition \ref{pboutcomesorder} is completed. We will show, now, that the minimum book-keeping order becomes $j+k-1$ if either $j=s_0$, $k>s_0$ or $k=s_0$, $j>s_0$. To this end, the following lemma must be used: 
	\begin{lemma}
		Consider the angular variables
		\begin{equation}
		L_E = u+\omega+\Omega, \quad
		\bar{\omega}=\omega +\Omega, \quad
		L_{T,P} = f_P +\Omega_P+\omega_P, \quad \bar{\omega}_P=\omega_P+\Omega_P
		\label{transfLemma1}
		\end{equation}
		where $L_E$ is the eccentric longitude, $L_{T,P}$ is the planet's true longitude and $\bar{\omega}$ and $\bar{\omega}_P$ are the longitudes of the pericenter of the particle. The Hamiltonian function contains terms of the form
		\[
		f_{\bm{m}}(\delta\Lambda,i,\eta,r;\aV,a_P,e_P)e_P^p e^q\cos(m_1L_E+m_2L_{T,P}+m_3\bar{\omega}+m_4\Omega+m_5\bar{\omega}_P+m_6\Omega_P)
		\]
		fulfilling the following D'Alembert rules:

	\begin{flalign}
 & \hspace{5mm} \bullet \hspace{1.5mm}	m_1+m_2+m_3+m_4+m_5+m_6=0 \label{d'al1} &\\
 &  \hspace{5mm} \bullet \hspace{1.5mm}	q-|m_3| \hspace{1.5mm} \mbox{is nonnegative and positive} \label{d'al2} & \\ 
 & \hspace{5mm} \bullet \hspace{1.5mm}	p-|m_5| \hspace{1.5mm} \mbox{is nonnegative and positive}. \label{d'al3} 
\end{flalign}

\label{lemmaDR}
\end{lemma}
We recall that in the selected reference frame $\Omega_P=\omega_P=0$. The proof of Lemma \ref{lemmaDR} is given in Appendix \ref{Applemmaproof}.

If, now, $j=s_0$, $A_1$ does not depend explicitly on $e$ and $\phi$ (as a consequence of the adopted book-keeping rules). The only small parameter on which it can depend will be either $\mu$ or $\delta \Lambda^2$. Thus
\[
\frac{\partial A_1}{\partial e}=\frac{\partial A_1}{\partial \phi}=0. 
\]
Moreover, from Lemma \ref{lemmaDR} it follows that $A_1$  does not contain $\bar{\omega}$. This implies that either it does not depend on $u$ and $\omega$ or it is of the form 
\begin{equation*}	A_1 = \epsilon^{j}f_{\bm{k}}(\eta,i,r;e_P,a_P,\aV)\cos(k_1u+k_2f_P+k_3\omega+k_4\Omega), \qquad k_3=k_1.
\end{equation*}
We have 
\[
\begin{split}
\frac{\partial A_1}{\partial u}\frac{\partial u}{\partial \gamma}+\frac{\partial A_1}{\partial \omega}\frac{\partial \omega}{\partial \gamma}=  -\epsilon^{j} & f_{\bm{k}}(\eta,i,r;e_P,a_P,\aV)\sin(k_1u+k_2f_P+k_3\omega+k_4\Omega)\Big(k_1\frac{\aV}{r}-k_3\Big)\\ &+\epsilon^{s_0+j}\mathcal{O}(\delta \Lambda).
\end{split}
\]
Since $k_1=k_3$ and $r=\aV(1-\epsilon e\cos u)+\mathcal{O}(\epsilon^{s_0}\delta \Lambda)$,  it follows that
\[
k_1\frac{\aV}{r}-k_3=\frac{k_1}{r}\aV e \cos u \epsilon +\mathcal{O}(\epsilon^{s_0} \delta \Lambda).
\]
so that
	\[
	\frac{\partial A_1}{\partial u}\frac{\partial u}{\partial \gamma}+\frac{\partial A_1}{\partial \omega}\frac{\partial \omega}{\partial \gamma}\sim \epsilon^{j+1}.
	\]	
Hence, if $j=s_0$ and $k>s_0$ the terms of smallest order, equal to $j+k-1$, are generated by 
\[
\begin{split}
&\Big(\frac{\partial A_1}{\partial u}\frac{\partial u}{\partial \gamma}+\frac{\partial A_1}{\partial \omega}\frac{\partial \omega}{\partial \gamma}+\frac{\partial A_1}{\partial r}\frac{\partial r}{\partial \gamma}\Big)\Big(	
\frac{\partial A_2}{\partial e}\frac{\partial e }{\partial \Gamma}+\frac{\partial A_2}{\partial \phi}\frac{\partial \phi}{\partial \Gamma}\Big)\\ & -\Big(\frac{\partial A_2}{\partial u}\frac{\partial u}{\partial \gamma}+\frac{\partial A_2}{\partial \omega}\frac{\partial \omega}{\partial \gamma}+\frac{\partial A_2}{\partial \phi}\frac{\partial \phi}{\partial \gamma}\Big)\Big(\frac{\partial A_1}{\partial u}\frac{\partial u}{\partial \Gamma}+\frac{\partial A_1}{\partial r}\frac{\partial r}{\partial \Gamma}\Big)
\end{split}
\] 

In the same way we prove that when $k=s_0$ and $j>s_0$ the minimum order of $\{A_1,A_2\}$ is limited from below by $j+k-1$.  If $j=s_0$, $k>s_0$ or $k=s_0$, $j>s_0$, this completes the proof. If, however, both $k=j=s_0$, by the same formulas we have that $\{A_1,A_2\}$ does not generate any terms of order lower than $j+k$.  
\end{proof}
\paragraph{}
\begin{remark}
	Given two functions $A_1=\tilde{A}_1\epsilon^{j}$ and $A_2=\tilde{A}_2\epsilon^k$, with $j,k>s_0$ and $\tilde{A}_1$, $\tilde{A}_2$ of the form \eqref{formPBF}, the part of 
	$\{A_1,A_2\}$ generating terms of order $j+k-2$ is
	\begin{equation}
	\begin{split}
			&	\Big(\frac{\partial A_1}{\partial u}\frac{\partial u}{\partial \gamma}+\frac{\partial A_1}{\partial \omega}\frac{\partial \omega}{\partial \gamma}+\frac{\partial A_1}{\partial \phi}\frac{\partial \phi}{\partial \gamma}\Big)\Big(\frac{\partial A_2}{\partial e}\frac{\partial e }{\partial \Gamma}+\frac{\partial A_{2}}{\partial \phi}\frac{\partial \phi}{\partial \Gamma}\Big)\\ &-\Big(\frac{\partial A_2}{\partial u}\frac{\partial u}{\partial \gamma}+\frac{\partial A_2}{\partial \omega}\frac{\partial \omega}{\partial \gamma}+\frac{\partial A_2}{\partial \phi}\frac{\partial \phi}{\partial \gamma}\Big)\Big(\frac{\partial A_1}{\partial e}\frac{\partial e}{\partial \Gamma}+\frac{\partial A_{1}}{\partial \phi}\frac{\partial \phi}{\partial \Gamma}\Big)
	\end{split}
	\label{c_orderm2}
	\end{equation}
	\label{remarkOrderM2}
\end{remark}

\begin{remark}
	Given two functions $A_1=\tilde{A}_1\epsilon^{s_0}$ and $A_2=\tilde{A}_2\epsilon^k$, $k>s_0$, with $\tilde{A}_1$ and $\tilde{A}_2$ of the form \eqref{formPBF}, the part of  $\{A_1,A_2\}$ generating terms of order $s_0+k-1$ is 
	\begin{equation}
	\begin{split}
&	\Big(\frac{\partial A_1}{\partial u}\frac{\partial u}{\partial \gamma}+\frac{\partial A_1}{\partial \omega}\frac{\partial \omega}{\partial \gamma}+\frac{\partial A_1}{\partial r}\frac{\partial r}{\partial \gamma}\Big)\Big(	
	\frac{\partial A_2}{\partial e}\frac{\partial e }{\partial \Gamma}+	\frac{\partial A_2}{\partial \phi}\frac{\partial \phi}{\partial \Gamma}\Big) \\ & -\Big(\frac{\partial A_2}{\partial u}\frac{\partial u}{\partial \gamma}+\frac{\partial A_2}{\partial \omega}\frac{\partial \omega}{\partial \gamma}+\frac{\partial A_2}{\partial \phi}\frac{\partial \phi}{\partial \gamma}\Big)\Big(\frac{\partial A_1}{\partial u}\frac{\partial u}{\partial \Gamma}+\frac{\partial A_1}{\partial r}\frac{\partial r}{\partial \Gamma}\Big)
	\end{split}
	\label{c_orderm1}
	\end{equation}
	\label{remarkCriticalTerms}
\end{remark}

 \begin{remark}
	Given two functions $A_1=\tilde{A}_1\epsilon^{j}$ and $A_2=\tilde{A}_2\epsilon^k$, with $j,k>s_0$ and $\tilde{A}_1$, $\tilde{A}_2$ of the form \eqref{formPBF}, if $\tilde{A}_1$ does not explicitly depend on the equation of the center $\phi$ and on the eccentricity $e$ and it is either of the form
	\begin{equation}
	\tilde{A}_1 = f_{\bm{k}}(\eta,i,r;e_P,a_P,\aV)\cos(k_1u+k_2f_P+k_3\omega+k_4\Omega), \qquad k_3=k_1, k_1\ge 0
	\label{A1k1k3cos}
	\end{equation}
	or 
	\begin{equation}
	\tilde{A}_1 = f_{\bm{k}}(\eta,i,r;e_P,a_P,\aV)\sin(k_1u+k_2f_P+k_3\omega+k_4\Omega), \qquad k_3=k_1, k_1\ge 0
	\label{A1k1k3sin}
	\end{equation}
	the Poisson bracket $\{A_1,A_2\}$ produces terms of order equal or larger than $j+k-1$ (since the quantity in \eqref{c_orderm2} is zero). 
	\label{k1k3sameremark}
\end{remark}

\begin{remark}
	To automatically obtain all the terms with the correct book-keeping order, all terms generated by expressions of the form $\frac{\partial A_1}{\partial \gamma}\frac{\partial A_2}{\partial \Gamma}-\frac{\partial A_1}{\partial \Gamma}\frac{\partial A_2}{\partial \gamma}$ are automatically adjusted to appear with the same exponent of $r$ in the denominator. Then, in the numerator of all the resulting terms the variable $r$ is substituted with its expansion $r=\aV(1-\epsilon e \cos u)+\mathcal{O}(\epsilon^{s_0}\delta\Lambda)$.
\end{remark}

\begin{remark}
	If we target a normal form of order $s_m$, as defined in \eqref{maxorder}, all the contributions  $\mathcal{O}(\epsilon^{s_0}\delta \Lambda)$ in the partial derivatives used to compute the Poisson bracket can be neglected.
\end{remark}

\begin{remark}
	Once having computed the Poisson bracket by applying formula \eqref{pbdef}, we substitute $\phi$ with $ e\sin u$  in all produced terms depending on the equation of the center.
\end{remark}

\subsection{Homological Equation}
\label{section: normalization}
As mentioned in the introduction, at the $j$-th iteration of the normalization process, we must determine a generating function $\chi_{s_0+j-1}^{(j)}$ satisfying a homological equation of the form
\begin{equation}
\{Z_0,\chi_{s_0+j-1}^{(j)}\}+\epsilon^{s_0+j-1}{R}_{s_0+j-1}^{(j-1)}= \epsilon^{s_0+j-1}Z_{s_0+j-1}+ \mathcal{O}(\epsilon^{s_0+j}).
\label{homeq}
\end{equation}
We now give the precise form of the homological equation. 
\paragraph{}
By applying the formulas of subsection \ref{sectionPB}, the Poisson bracket $\{Z_0,\chi_{s_0+j-1}^{(j)}\}$ is given by 
\begin{equation}
\begin{split}
\{Z_0,\chi_{s_0+j-1}^{(j)}\} = & -n^*\Bigg(\frac{\aV}{r}\frac{\partial \chi_{s_0+j-1}^{(j)}}{\partial u}+\Big(\frac{\aV}{r}-1\Big)\frac{\partial \chi_{s_0+j-1}^{(j)}}{\partial \phi}\epsilon^{-1}+\frac{{\aV}^2 e\sin u }{r}\epsilon \frac{\partial \chi_{s_0+j-1}^{(j)}}{\partial r}\Bigg)\\ 
& -n_P\Bigg(1+ \frac{2e_P\cos(f_P)}{\eta_P^3}\epsilon +  \Big(\frac{1}{\eta_P^3}+\frac{e_P^2\cos^2(f_P)}{\eta_P^3}-1\Big)\epsilon^2\Bigg)\Bigg(\frac{\aV(1-\epsilon e\cos u)}{r}
\\&+\mathcal{O}(\epsilon^{s_0}\delta \Lambda)\Bigg)\frac{\partial \chi_{s_0+j-1}^{(j)}}{\partial f_P}.
\end{split}
\label{pbexpr}
\end{equation}

We then define $\chi_{s_0+j-1}^{(j)}$ by solving the equation: 
\begin{equation}
\begin{split}
-n^*\Bigg(\frac{\aV}{r}\frac{\partial \chi_{s_0+j-1}^{(j)}}{\partial u}+\Big(\frac{\aV}{r}-1\Big)\frac{\partial \chi_{s_0+j-1}^{(j)}}{\partial \phi}\epsilon^{-1}\Bigg) -n_P\frac{\aV}{r} &\frac{\partial \chi_{s_0+j-1}^{(j)}}{\partial f_P} +\epsilon^{s_0+j-1}{R}_{s_0+j-1}^{(j-1)}\\ &= \epsilon^{s_0+j-1}Z_{s_0+j-1}.
\end{split}
\label{chieq}
\end{equation}
The solution of \eqref{chieq} is found as follows: the function $Z_{s_0+j-1}$ contains all the terms of ${R}_{s_0+j-1}^{(j-1)}$ not depending on $\lambda$ and $\lambda_P$. Beside these terms, the function ${R}_{s_0+j-1}^{(j-1)}$ contains four more different types of terms:

\begin{itemize}
	\item type 1:  $\frac{\aV}{r}f(e,i,\eta,\omega,\Omega)$,
	\item type 2:  $ \frac{\aV}{r} \widehat{f}_{\bm{k}}(e,i,\eta)\cos(k_1u+k_2f_P+k_3\omega+k_4\Omega)$,
	\item type 3:  $\frac{\aV}{r^p}\bar{f}(e,i,\eta,\omega,\Omega)$, $p>1$,
	\item type 4:  $ \frac{\aV}{r^p} \tilde{f}_{\bm{k}}(e,i,\eta)\cos(k_1u+k_2f_P+k_3\omega+k_4\Omega)$, $p>1$.
\end{itemize} 
 
Depending on the type of encountered term to be normalized, the generating function $\chi_{s_0+j-1}^{(j)}$ must acquire a corresponding term equal to:
\begin{itemize}
	\item for type 1: $\epsilon\frac{1}{n^*}f(e,i,\eta,\omega,\Omega)\phi$,
	\item for type 2:  $ \frac{1}{k_1n^*+k_2n_P} \widehat{f}_{\bm{k}}(e,i,\eta)\cos(k_1u+k_2f_P+k_3\omega+k_4\Omega)$,
	\item for type 3:  $\epsilon\frac{1}{n^*}\phi\sum_{k=1}^{p}\frac{\hat{f}(e,\eta,i,\omega,\Omega)}{a^{k-1}r^{p-k}}$, $p>1$,
	\item for type 4:  $\frac{1}{k_1n^*+k_2n_P}\frac{1}{{r}^{p-1}} {\tilde{f}_{\bm{k}}(e,\eta,i)\sin(k_1u+k_2f_P+k_3\omega+k_4\Omega)}$, $p>1$.
\end{itemize} 
Then, the outcome of the operation $\{Z_0,\chi_{s_0+j-1}^{(j)}\}+\epsilon^{s_0+j-1}R_{s_0+j-1}^{(j-1)}$ yields terms in the normal form having the form as follows:
\begin{itemize}
	\item for each normalized term of type 1:  $f(e,i,\eta,\omega,\Omega)$,
	\item for each normalized term of type 2:  $0$,
	\item for each normalized term of type 3:  $\frac{\bar{f}(e,i,\eta,\omega,\Omega)}{{\aV}^{p-1}}$, $p>1$,
	\item for each normalized term of type 4:  $ 0$.
\end{itemize}  
We note that the residual of the normalization is equal to zero only for the terms of type~1. 
Another important remark regards the average value $<\chi_{s_0+j-1}^{(j)}>$ of the generating function $\chi_{s_0+j-1}^{(j)}$ with respect to the angles $\lambda$,$\lambda_P$. We have that
\[
\begin{split}
\chi_{s_0+j-1}^{(j)}= & \epsilon^{s_0+j-1}\sum_{\bm{k}} \frac{1}{k_1n^*+k_2n_P}\frac{f_{\bm{k}}(e,\eta,i)}{r^q}\sin(k_1u+k_2f_P+k_3\omega+k_4\Omega) \\ & +\epsilon^{s_0+j}\sum_m \frac{1}{n^*}\frac{f_m(e,i,\eta,\omega,\Omega)}{r^p}\phi, \quad p,q\ge1.
\end{split}
\]
Thus, the average $<\chi_{s_0+j-1}^{(j)}>$ is different from zero. This generates no problem for the iterative application of the method. However, it is customary to subtract from $\chi_{s_0+j-1}^{(j)}$ the average $<\chi_{s_0+j-1}^{(j)}>$ in order that the elements found in the normal form properly correspond to mean elements (see \cite{Lara2013_2}). We collect in Appendix \ref{AppAV} all the formulas required for the computation of the average $<\chi_{s_0+j-1}^{(j)}>$. 

\paragraph{}
From Proposition \ref{pboutcomesorder}, we have that the terms generated by the Lie transformation are of order higher than the term normalized at each step when $s_0>1$. In case $s_0=1$ this no longer holds true. We, then, have two distinct algorithms to perform the normalization depending on whether $s_0>1$ or $s_0=1$. 

\subsection{Normalization process for $s_0>1$} 
The normalization process consists of determining a succession of Lie transformations leading to the targeted normal form. If $s_m$ is the targeted order of the final normal form, $s_m-s_0+1$ steps must be performed. At each step $j$ the goal is to normalize the Hamiltonian $H^{(j-1)}$ obtained at the previous step. For this purpose, the homological equation (equations \eqref{homeq} and \eqref{chieq}) is solved to determine the generating function $\chi_{s_0+j-1}^{(j)}$; the new Hamiltonian is 
\[
\mathcal{H}^{(j)}=\exp(\mathcal{L}_{\chi_{s_0+j-1}^{(j)}})\mathcal{H}^{(j-1)}=Z_0+\sum_{s=s_0}^{s_0+j-1}\epsilon^sZ_{s}+\sum_{s=s_0+j}^{s_m}\epsilon^s{{R}}_{s}^{(j)};
\]
For $j=1$ (first step), $\mathcal{H}^{(j-1)}=\mathcal{H}^{(0)}$ (see \eqref{Rsum}).

\paragraph{} 
The remainder terms ${{R}}_{(s)}^{(j)}$, with $s\ge s_0+j$, contain three parts:
\begin{enumerate}[i)]
	\item $R_{s}^{(j-1)}$;
	\item the remainder of the homological equation $\eqref{homeq}$;
	\item the terms generated by the the Lie transformation, i.e. coming from  \[
	\Big\{\sum_{s=s_0}^{s_0+j-2}\epsilon^sZ_{s}+\sum_{s=s_0+j-1}^{s_m}\epsilon^s{{R}}_{(s)}^{(j-1)},\chi_{s_0+j-1}^{(j)}\Big\} + \frac{1}{2}\Big\{\Big\{\mathcal{H}^{(j-1)},\chi_{s_0+j-1}^{(j)}\Big\},\chi_{s_0+j-1}^{(j)}\Big\}+\dots;
	\]
\end{enumerate}
Concerning the last part, from Proposition \ref{pboutcomesorder} we have
\[
\{R_{s}^{(0)},\chi_{s_0}^{(1)}\}=\left\{\begin{array}{cl} \mathcal{O}(\epsilon^{2s_0}), & s=s_0,\\
\mathcal{O}(\epsilon^{s+s_0-1}), & s>s_0,
\end{array}\right., \qquad \mbox{for} \quad j=1 \]
while 
\[
\begin{split}
\{R_{s}^{(j-1)},\chi_{s_0+j-1}^{(j)}\}= & \mathcal{O}(\epsilon^{s+s_0+j-3}), \quad s\ge s_0+j-1 \\
\{Z_{s},\chi_{s_0+j-1}^{(j)}\}= & \left\{\begin{array}{cl} \mathcal{O}(\epsilon^{s+s_0+j-2}), & s=s_0,\\
\mathcal{O}(\epsilon^{s+s_0+j-3}), & s_0<s<s_0+j-1
\end{array}\right.\\ \end{split},  \qquad \forall j>1. \]
Then, the smallest order of the terms coming from the Lie transformation is equal to $2s_0$, for $j=1$, or equal to $2s_0+j-2$, for $j>1$. Since $s_0>1$, we have that the remainder is always of order higher than $s_0+j-1$, i.e. the order of the normalized term in the Hamiltonian $\mathcal{H}^{(j-1)}$. 
\paragraph{}

A detailed example of the normalization process for $s_0>1$ is given in Appendix \ref{examplenormprocss}. We note that the case $s_0>1$ is rather generic, in the sense that it applies to all trajectories except for those with $e=\mathcal{O}({m}_P/\mathcal{M})$.
 
 \subsection{Normalization process for $s_0=1$}
 \paragraph{Size of the remainder}
 If the particle's orbital eccentricity $e$ is very small($e\sim\mathcal{O}({\frac{{m}_P}{\mathcal{M}}})$) we obtain from \eqref{s0def} $s_0=1$. In this case, at the generic $j$-th iteration of the normalization algorithm it is easy to see that the operator $\mathcal{H}^{(j)}=\exp(L_{\chi_j})\mathcal{H}^{(j-1)}$ produces remainder terms of the \textit{same} book-keeping order as those normalized. Consider the Poisson bracket
  \[
 \Big\{\sum_{s=1}^{j-1}\epsilon^sZ_{s}+\sum_{s=j}^{s_m}\epsilon^s{{R}}_{(s)}^{(j-1)},\chi_{j}^{(j)}\Big\} + \frac{1}{2}\Big\{\Big\{\mathcal{H}^{(j-1)},\chi_{j}^{(j)}\Big\},\chi_{j}^{(j)}\Big\}+\dots
 \]
 
 From Proposition \ref{pboutcomesorder} we have
 \[
 \{R_{s}^{(0))},\chi_{1}^{(1)}\}=\left\{\begin{array}{cl} \mathcal{O}(\epsilon^{2}), & s=1,\\
 \mathcal{O}(\epsilon^{s}), & s>1,
 \end{array}\right., \qquad \mbox{for} \quad j=1. \]
 However
 \[
 \begin{split}
 \{R_{s}^{(j-1)},\chi_{j}^{(j)}\}= &\mathcal{O}(\epsilon^{s+j-2}), \quad s\ge j\\
 \{Z_{s},\chi_{j}^{(j)}\}=&\left\{\begin{array}{cl} \mathcal{O}(\epsilon^{s+j-1}), & s=1,\\
 \mathcal{O}(\epsilon^{s+j-2}), & 1<s<j.
 \end{array}\right.\\
 \end{split}, \qquad \forall j>1
 \]
 Then, for $j=2$ the Poisson brackets $\{R_{2}^{(1)},\chi_{2}^{(2)}\}$ and $\{Z_{1},\chi_{2}^{(2)}\}$ generate terms of book-keeping order equal to $2$. Similarly, for $j>2$ the Poisson brackets $\{Z_{1},\chi_{j}^{(j)}\}$ and $\{Z_{2},\chi_{j}^{(j)}\}$ generate terms of order $j$.  Control of these extra terms can be achieved on the basis of the following   
 \begin{proposition}
 	The following properties hold
 	\begin{itemize}
 		\item[i)] The normal form term $Z_{1}$ satisfies the relation
 		\[
 		\{Z_{1},\chi_{j}^{(j)}\}\sim \epsilon^{j+1}, \quad \forall j\ge 2; 
 		\] 
 		\item[ii)] 
 		The normal form term ${Z_{2}}$ satisfies the relation 
 		\[
 		\{Z_{2},\chi_{j}^{(j)}\}\sim \epsilon^{j+1}, \quad  \forall j\ge 3; 
 		\] 
 		\item[iii)] Let $\{R_{2}^{(2)}\}$ denotes the terms of book-keeping order $2$ coming from $\{R_{2}^{(1)},\chi_{2}^{(2)}\}$. Let $\chi_{2}^{(2,\rm bis)}$  be the Lie generating function normalizing $R_{2}^{(2)}$ according to equation \eqref{homeq}. Let 
 		\[
 		\mathcal{H}^{(2,\rm bis)}=\exp(\mathcal{L}_{\chi_{2}^{(2,\rm bis)}})\mathcal{H}^{(2)}=Z_0+Z_1+Z_2+Z_{2,\rm bis}+\sum_{s=3}^{s_m}\epsilon^sR_s^{(2, \rm bis)}
 		\]
 		be the new Hamiltonian computed by the Lie transform with the generating function $\chi_{2}^{(2,\rm bis)}$. The remainder of $\mathcal{H}^{(2,\rm bis)}$ has terms of book-keeping order larger than $2$. 		Moreover, we have
 		\[
 		\{Z_{2, \rm bis},\chi_{j}^{(j)}\}\sim \epsilon^{j+1}, \quad \forall j\ge3.
 		\] 	
 	\end{itemize}
 	\label{PropositionImportant}
 \end{proposition}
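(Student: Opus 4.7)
The strategy combines the D'Alembert-type rules of Lemma \ref{lemmaDR} with the order-counting refinements of Remarks \ref{remarkOrderM2}--\ref{k1k3sameremark}. Each $Z_j$ inherits its structural dependence on the eccentricities and angles from the single homological step that produced it; this, together with the normal-form condition, restricts which pieces the Poisson bracket with $\chi_k^{(k)}$ can generate at the dangerous lowest order $j+k-2$. The plan is to first pin down the functional form of $Z_1$, $Z_2$ and $Z_{2,\mathrm{bis}}$, and then to invoke Proposition \ref{pboutcomesorder} together with the remarks to gain one extra order of smallness.

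For (i), I would first show that $Z_1 = Z_1(\eta, i, \Omega; \aV, a_P, e_P, \eta_P)$. Since $s_0 = 1$, a term of book-keeping order $1$ can only come from the single $\mu$-factor of $\mathsf{R}$, hence must carry $q = p = 0$ in the D'Alembert exponents, forcing $m_3 = m_5 = 0$ by \eqref{d'al2}--\eqref{d'al3}. The normal-form condition adds $m_1 = m_2 = 0$, which together with $m_3 = 0$ kills every $\omega$ contribution (including the hidden one in $m_1 L_E = m_1(u + \omega + \Omega)$) as well as every dependence on $u, r, \phi, f_P$; the residuals of types~1 and~3 inherit the same constraints. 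Consequently $\partial Z_1/\partial \omega = \partial Z_1/\partial e = \partial Z_1/\partial \phi = 0$, so that $\partial Z_1/\partial \gamma = -\partial Z_1/\partial \omega = 0$ and no $\epsilon^{-1}$ can appear in any derivative of $Z_1$. Inspecting \eqref{pbdef} sector by sector then bounds $\{Z_1, \chi_j^{(j)}\}$ by $\epsilon^{j+1}$: the $(\Theta, \theta)$ block contributes $\epsilon \cdot \epsilon^j$, the $(\Gamma, \gamma)$ block reduces to $-\partial Z_1/\partial \Gamma \cdot \partial \chi_j^{(j)}/\partial \gamma \sim \epsilon \cdot \epsilon^j$, the $(\delta\Lambda, \lambda)$ block is equally small, and the remaining blocks vanish identically.

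For (ii), $Z_2$ now admits $q, p \leq 2$ and can therefore carry explicit $e$ and $\bar\omega$ factors with $|m_3| \leq q$, $|m_5| \leq p$; it remains, however, independent of $u, f_P, r, \phi$ since it is in normal form. The piece of $\{Z_2, \chi_j^{(j)}\}$ of order $j + k - 2 = j$ is given by \eqref{c_orderm2} of Remark \ref{remarkOrderM2}, and with $\partial Z_2/\partial u = \partial Z_2/\partial \phi = 0$ this expression collapses to a bilinear combination of $\partial Z_2/\partial \omega$, $\partial Z_2/\partial e$, $\partial \chi_j^{(j)}/\partial e$ and $\partial \chi_j^{(j)}/\partial \phi$. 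The key identity to exploit is $\partial \phi/\partial \Gamma = -(\aV \sin u/r)\,\partial e/\partial \Gamma$, which, combined with the marginal D'Alembert relation $|m_3| = q$ on the monomials that are actually dangerous, produces the exact cancellation that pushes the order from $j$ to $j+1$. The hypothesis $j \geq 3$ is used only in the subsequent check that the $(\Theta, \theta)$ and $(\delta\Lambda, \lambda)$ blocks contribute at order $\geq j+1$; for $j = 2$ the bound would become marginal, and this is precisely what motivates the bis iteration in (iii).

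For (iii), I would decompose $\exp(\mathcal{L}_{\chi_2^{(2,\mathrm{bis})}})\mathcal{H}^{(2)}$ and identify every contribution of book-keeping order $2$. The bracket $\{Z_0, \chi_2^{(2,\mathrm{bis})}\}$ yields the intended $\epsilon^2(Z_{2,\mathrm{bis}} - R_2^{(2)})$; $\{Z_1, \chi_2^{(2,\mathrm{bis})}\}$ is $\mathcal{O}(\epsilon^3)$ by part (i); $\{Z_2, \chi_2^{(2,\mathrm{bis})}\}$ is also $\mathcal{O}(\epsilon^3)$ because the cancellation mechanism of (ii) does not actually require $j \geq 3$ (only the final sector-wise check did); and $\{R_s^{(2)}, \chi_2^{(2,\mathrm{bis})}\}$ with $s \geq 3$ is $\mathcal{O}(\epsilon^3)$ directly by Proposition \ref{pboutcomesorder}. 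Summing these shows that no order-$2$ remainder survives outside $Z_2 + Z_{2,\mathrm{bis}}$. For the last assertion of (iii), $R_2^{(2)}$ is the order-$2$ projection of a Poisson bracket of two functions satisfying Lemma \ref{lemmaDR}, and Poisson brackets of D'Alembert-type monomials are again D'Alembert-type monomials; hence $Z_{2,\mathrm{bis}}$ inherits the same structural profile as $Z_2$, and the argument of (ii) applies verbatim to give $\{Z_{2,\mathrm{bis}}, \chi_j^{(j)}\} \sim \epsilon^{j+1}$ for $j \geq 3$. The main difficulty of the whole proof lies in the monomial-level cancellation in part (ii): one must decompose $Z_2$ into D'Alembert monomials and check that the pieces of \eqref{c_orderm2} with $|m_3| = q$ exactly annihilate each other via the identity relating $\partial \phi/\partial \Gamma$ and $\partial e/\partial \Gamma$; once this is in hand, (iii) follows as a bookkeeping consequence.
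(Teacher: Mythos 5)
Your part (i) is sound and follows the paper's own route: the book-keeping rules force $Z_1$ to be independent of $e$ and $\phi$, Lemma \ref{lemmaDR} then excludes $\omega$, and the normal-form condition excludes $u$ and $r$, so the dangerous block \eqref{c_orderm1} vanishes identically. The gap is in part (ii). You allow $Z_2$ to carry factors $e^q\cos(m_3\bar{\omega}+\dots)$ with $q\le 2$ and then assert that the order-$j$ part \eqref{c_orderm2} of $\{Z_2,\chi_j^{(j)}\}$ is killed by an ``exact cancellation'' based on the identity $\partial\phi/\partial\Gamma=-(\aV\sin u/r)\,\partial e/\partial\Gamma$. No such cancellation exists in general: if $Z_2$ contained, say, a monomial $g(\eta,i)\,e^2\cos(2\bar{\omega}-2\bar{\omega}_P)$ (perfectly admissible under D'Alembert and the normal-form condition), then the second term of \eqref{c_orderm2}, $-\big(\partial\chi_j^{(j)}/\partial u\cdot\partial u/\partial\gamma+\dots\big)\big(\partial Z_2/\partial e\cdot\partial e/\partial\Gamma\big)$, is generically nonzero at order $j$ and is not matched by the first term. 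The actual reason the proposition holds is structural: the paper proves (Lemma \ref{lemmacentrale}, whose proof occupies most of the appendix) that every term of $R_2^{(1)}$ is either of ``type A'' (independent of $e$ and $\phi$, of the form \eqref{A1k1k3cos} with $k_1=k_3$) or of ``type B'' (linear in $e$ but necessarily carrying $u$ or $f_P$); since type-B terms never survive the averaging, $Z_2$ is independent of $e$, $\phi$, $u$ \emph{and} $\omega$, and \eqref{c_orderm2} vanishes identically by Remark \ref{remarkOrderM2}. Establishing Lemma \ref{lemmacentrale} requires tracking the three sources of $R_2^{(1)}$ (the original $R_2^{(0)}$, the residual of the first homological equation, and the first Lie transform), and this is precisely the part you have replaced by an unproven claim.

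Part (iii) has a second, independent hole: among the order-$2$ contributions to $\mathcal{H}^{(2,\rm bis)}$ you list $\{Z_0,\cdot\}$, $\{Z_1,\cdot\}$, $\{Z_2,\cdot\}$ and $\{R_s^{(2)},\chi_2^{(2,\rm bis)}\}$ for $s\ge 3$, but you omit $\{R_2^{(2)},\chi_2^{(2,\rm bis)}\}$, which by Proposition \ref{pboutcomesorder} is generically of order $2+2-2=2$ and is exactly the term that threatens to re-populate the order just normalized. The paper disposes of it by showing that $R_2^{(2)}$ arises only from brackets of pairs of type-B terms, that the linear factors of $e$ cancel in \eqref{c_orderm2}, and hence that $R_2^{(2)}$ and $\chi_2^{(2,\rm bis)}$ are $e$-independent and of the special forms \eqref{A1k1k3cos}, \eqref{A1k1k3sin} with $k_1=k_3$, so that Remark \ref{k1k3sameremark} raises the order to $3$. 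Your closing observation that Poisson brackets preserve the D'Alembert rules is true but insufficient: D'Alembert compatibility alone does not force $Z_{2,\rm bis}$ (or $R_2^{(2)}$) to be independent of $e$ and $\omega$, which is the property actually needed. Both parts (ii) and (iii) therefore hinge on the type-A/type-B classification of Lemma \ref{lemmacentrale}, and the proof is incomplete without it.
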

 The proof of Proposition \ref{PropositionImportant} is given in appendix \ref{AppPropProof}. From it, it follows that only at the second step of the normalization process the Lie transformation will generate terms with the same order as the normalized term. We show now how to deal with this problem by performing just one more additional normalization step. 
 
\paragraph{Adjustment of the normalization process}
The normalization process must be modified as follows:

\begin{itemize}
	\item The first step is as in the case $s_0>1$.
	\item  The second step consists of two sub-steps; in the first sub-step, the generating function $\chi_{2}^{(2)}$ is determined leading to the new Hamiltonian
	\[
	\mathcal{H}^{(2)}=\exp(\mathcal{L}_{\chi_{2}^{(2)}})\mathcal{H}^{(1)}.
	\]
	In the second sub-step,  the generating function $\chi_{2}^{(2,\rm bis)}$ is computed as described above and the new Hamiltonian is
	\[
	\mathcal{H}^{(2,\rm bis)}=\exp(\mathcal{L}_{\chi_{2}^{(2,\rm bis)}})\mathcal{H}^{(2)}.
	\]
	\item In the third step, the Hamiltonian $\mathcal{H}^{(2,\rm bis)}$ is normalized up to the third order in $\epsilon$; the Lie transformation leads to the new Hamiltonian 
\[
\mathcal{H}^{(3)}=\exp(\mathcal{L}_{\chi_{3}^{(3)}})\mathcal{H}^{(2,\rm bis)}.
\]
\item Successive iterations beyond the order $3$ are performed as in the case $s_0>1$.
\end{itemize}

\section{Numerical application in the PCR3BP }
\label{section: outcomesPCR3BP}
We applied the method described in section \ref{section: method} in the case of the PCR3BP considering Jupiter as the perturbing planet. The orbital planes of the body and planet coincide and the planet orbit is assumed circular ($e_P=0$). This implies that $f_P=\lambda_P$, $i=0$ and $\Omega=0$ so that the Hamiltonian does not depend on the Delaunay variables $\Theta$ and $\theta$.
\paragraph{}
 We perform two tests to assess the applicability and precision of the method. 
As a first test, we estimate the size of reminder $\mathcal{H}_R=\sum_{s\ge s_m+1}^{+\infty} R_s^{(s_m-s_0+1)}$ of the normal form and compare it to the size of the initial disturbing function $R=\mathcal{H}^{(0)}-Z_0$ (see \eqref{Rsum}). We perform
a multipolar expansion of degree $10$ and a normalization up to a certain order $s_m$ in book-keeping set as
\[
s_m = \min(2s_0-1,s_0+10)
\]
with $s_0$ given in \eqref{s0def}. Such a choice is empirically found to yield a good compromise between computational load and requirements for precision.
\paragraph{}
 To obtain estimates of the remainder size, we consider a truncation of the remainder up to terms of book-keeping order $s_m+3$: 
\[
\tilde{\mathcal{H}}_R=\sum_{s= s_m+1}^{s_m+3} R_s^{(s_m-s_0+1)};
\]
Writing $\mathcal{H}_R$ in the form
\[
\tilde{\mathcal{H}}_R=\sum_{\bm{k}} \frac{\aV}{r^q}f_{\bm{k}}(e,\eta)\cos(k_{1}u+k_{2}\lambda_P+k_{3}\omega), \qquad q\ge 1,
\]
the size of $\mathcal{H}_R$ can be estimated through the norm
\[
||\tilde{\mathcal{H}}_R||=\sum_{\bm{k}} \Big|\frac{f_{\bm{k}}(e,\eta)}{{\aV}^{q-1}(1-e)^q}\Big|.
\]
The norm of $R$ was computed with the same definition. Figure \ref{fig: errorGraph} shows  $\log_{10}\Big(||\tilde{\mathcal{H}}_R||/||{{R}}||\Big)$ in color scale in a grid of values for the initial semi-major axis $a(0)=\aV$ and eccentricity $e(0)=\sqrt{1-\Big(1-\frac{\Gamma(0)}{n*{\aV}^2}\Big)^2}$. The quantity $\log_{10}\Big(||\tilde{\mathcal{H}}_R||/||{{R}}||\Big)$ gives an estimate of the relative size of the remainder with respect to the initial perturbation, which estimates, in turn, the relative error of the semi-analytically computed trajectory with respect to the true trajectory. Denoting $e_0=e(0)$ and $a_0=a(0)$, in Figure \ref{fig: errorGraph} the red line corresponds to the set of points $(a_0,e_0)$ such that $a_0(1+e_0)=a_P$, i.e. the radius at the apocenter coincides with the Sun-Jupiter distance (for the PR3BP $r_P=a_P$). Since our method is applicable to particles with trajectories lying entirely inside the trajectory of the planet, the red line represents an upper boundary of the region in the $(a_0,e_0)$ plane in which the method can be applied. The black line represents the upper boundary of the values $(a_0,e_0)$ for which we have Hill-stable orbits. An orbit is defined as Hill-stable when its Jacobi constant $C$ is larger than the Jacobi constant $C_{L_1}$ at the Lagrangian point $L_1$. The boundary drawn was determined as described in \cite{Ramos2015}. 

From Figure \ref{fig: errorGraph}, we can observe that the relative error is lower than $10^{-2}$ for each value of the eccentricity up to an initial semi-major axis lower than $\sim 0.3 a_P$; up to $\sim 0.45 a_P$ it raises above $10^{-1}$ only for the higher values of $e(0)$. On the other hand, getting closer to the Hill-unstable region, the error becomes higher, and keeps having acceptable values (of few percent) only in regions with low eccentricity. In the Hill-unstable region the error is everywhere high. In the figure we can notice also several vertical strips along which the error is always higher than in their neighbourhood. These strips correspond to mean motion resonances, in which the method fails due to small divisors appearing along the normalization process; note that in the Hill-stable region, the error is high at those domains where the concentration of these strips becomes more conspicuous.  

\begin{figure}[h!]
	\centering
	\includegraphics[width=0.6\textwidth]{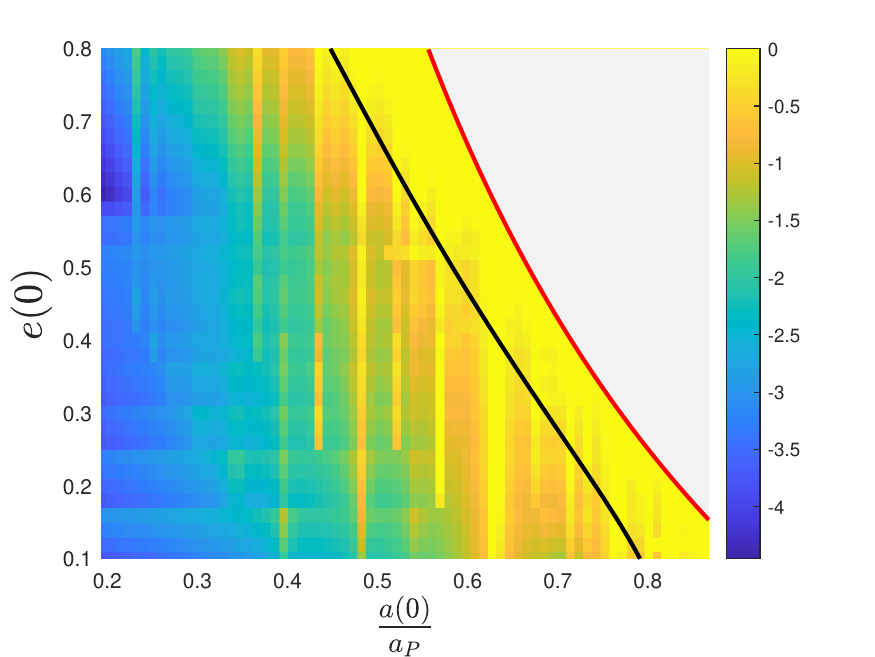}
	\caption{Variation of the relative error  $\log_{10}\Big(||\tilde{\mathcal{H}}_R||/||{{R}}||\Big)$ with respect to the initial values of the semi-major axis and the eccentricity: $||\tilde{\mathcal{H}}_R||$ is an upper bound estimate of the norm of the remainder after the normalization process;  $||R||$ is an upper bound estimate of the norm of the initial disturbing function, equal to the initial Hamiltonian minus the leading term. The red line is the set of point $(a,e)$ such that $a(1+e)=a_P$. On the left of the black line, the points $(a,e)$ correspond to Hill-stable trajectories. Further details are given in the text. }
	\label{fig: errorGraph}
\end{figure}

\begin{figure}
	\begin{subfigure}{0.48\textwidth}
		\centering
		\includegraphics[width=\textwidth]{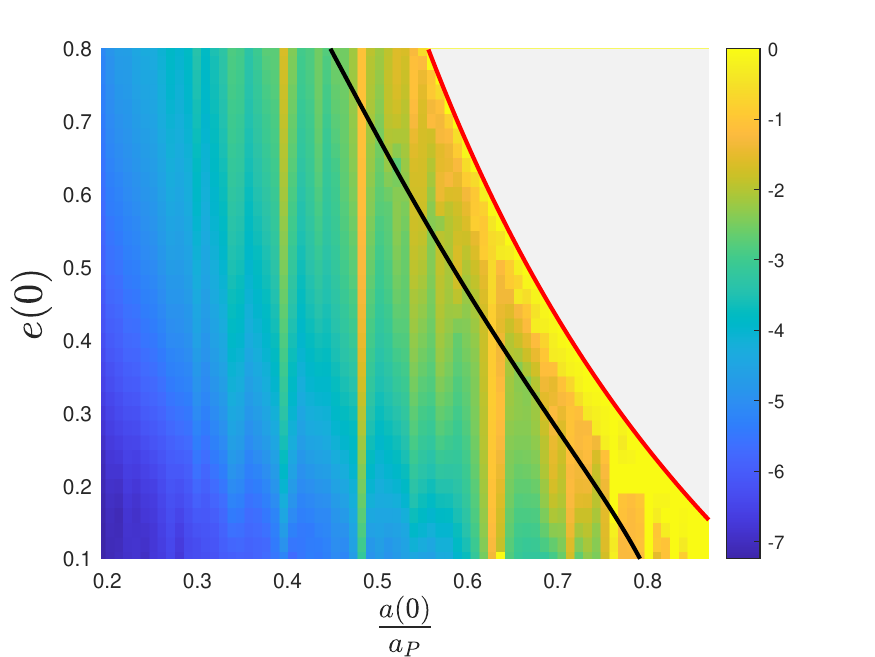}
		\subcaption{semi-major axis}
		\label{fig: smaGraph}
	\end{subfigure}
	\begin{subfigure}{0.48\textwidth}
	\centering
	\includegraphics[width=\textwidth]{numCompGraph.pdf}
	\subcaption{eccentricity}
	\label{fig: eccGraph}
\end{subfigure}
\caption{Variation of $\log_{10}{\Big(\max\limits _t\big(\big|\frac{a(t)-a_N(t)}{a_N(t)}\big|\big)\Big)}$ (left) and $\log_{10}{\Big(\max\limits _t\big(\big|\frac{e(t)-e_N(t)}{e_N(t)}\big|\big)\Big)}$ (right) with respect to the initial values of the semi-major axis and the eccentricity; $a(t)$ and $e(t)$ are computed through the normal form and the Lie transformations; $a_N(t)$ and $e_N(t)$ are computed through numerical propagation of the trajectory.}
\label{fig: smaeccGraph}
\end{figure}

\begin{figure}[h!]
	\centering
	\includegraphics[width=0.6\textwidth]{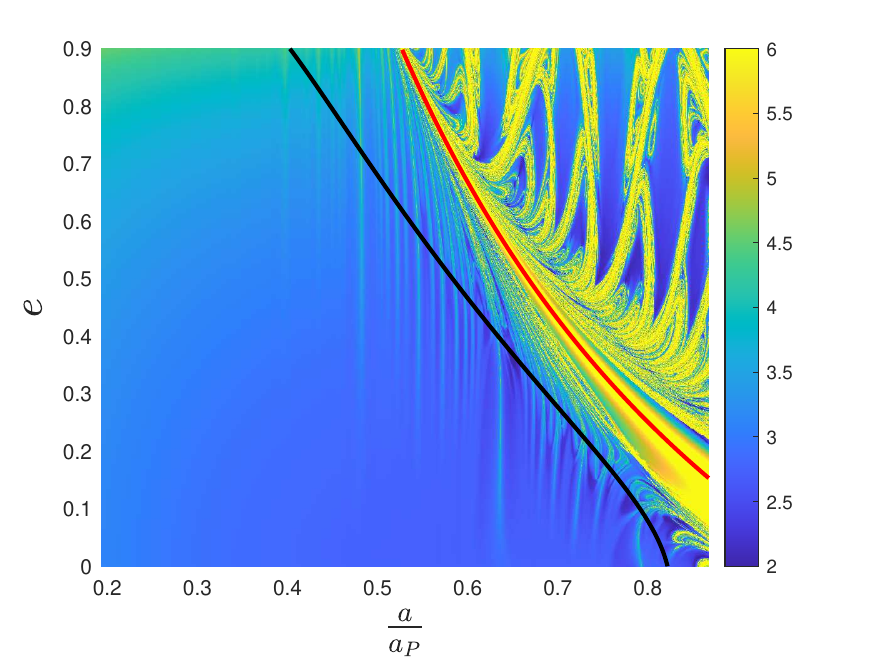}
	\caption{A numerical stability map of the domain $(a,e)$ of interest obtained through the calculation of the Fast Lyapunov Indicators (see text).}
	\label{fig: stabilitymap}
\end{figure}

As a second test, we compare the semi-analytical computation of the evolution of the orbital semi-major axis and eccentricity using the normal form with the results obtained through the numerical propagation of the particle's trajectory. 

After $s_m-s_0+1$ steps, the normalization process transforms the original canonical variables, $(\delta \Lambda^{(0)},\Gamma^{(0)},I_P^{(0)},\lambda^{(0)},\gamma^{(0)},\lambda_P^{(0)})$, into new canonical variables $(\delta \Lambda^{(s_m-s_0+1)},\Gamma^{(s_m-s_0+1)}$, 

$I_P^{(s_m-s_0+1)},\lambda^{(s_m-s_0+1)}, \gamma^{(s_m-s_0+1)},\lambda_P^{(s_m-s_0+1)})$. 
Using Hamilton's equations for the normal form Hamiltonian $Z=\sum_{s=s_0}^{s_m}Z_s$ we compute the evolution of the trajectories in the new canonical variables and back transform the result to obtain the evolution also in the original variables through the composition of the Lie series; for example
\[
\delta \Lambda^{(0)}=\exp(\{\cdot,\chi_{s_m}^{(s_m-s_0+1)}\})\exp(\{\cdot,\chi_{s_m-1}^{(s_m-s_0)}\})\dots \exp(\{\cdot,\chi_{s_0}^{(1)}\})\delta \Lambda^{(s_m-s_0+1)},
\]
with analogous formulas holding for all the other variables. Since the initial conditions of any trajectory are given in the original variables, to compute the initial conditions in the new variables, the inverse transformation must be used; for example, we have
\[
\delta \Lambda^{(s_m-s_0+1)}(0)=\exp(-\{\cdot,\chi_{s_0}^{(1)}\})\exp(-\{\cdot,\chi_{s_0+1}^{(2)}\})\dots \exp(-\{\cdot,\chi_{s_m}^{(s_m-s_0+1)}\})\delta \Lambda^{(0)}(0). \]

Since 
\[
\frac{ d \delta \Lambda^{(s_m-s_0+1)}}{dt}=\frac{d {Z}}{d\lambda^{(s_m-s_0+1)}}=0, \quad \frac{d\Gamma^{(s_m-s_0+1)}}{dt}=\frac{d {Z}}{d\gamma^{(s_m-s_0+1)}}=0
\]
$\delta \Lambda^{(s_m-s_0+1)}$ and $\Gamma^{(s_m-s_0+1)}$ are integrals of motion, while $\delta \Lambda^{(0)}$ and $\Gamma^{(0)}$ change in time. Having computed their evolution, we can obtain also the evolution of the semi-major axis and the eccentricity as
\[
a^{(0)}=\frac{(n^*{\aV}^2+\delta \Lambda^{(0)})^2}{\mathcal{GM}}, \quad e^{(0)}=\sqrt{1-\Big(1-\frac{\Gamma^{(0)}}{n^*{\aV}^2+\delta\Lambda^{(0)}}\Big)}.
\]
Note that both $a^{(s_m-s_0+1)}$ and $e^{(s_m-s_0+1)}$ are constants of motion (the  `proper' semi-major axis and  `proper' eccentricity) under the flow of $Z$ in the PC3BP. The initial conditions imposed are: $\delta \Lambda=0$, $\omega = 90^{\circ}$, $M=90^{\circ}$, $\lambda_P=0^{\circ}$. Both forward and backward propagations in time were performed for each trajectory considering an interval from $-50$ years to $50$ years. The disturbing function was computed as in the previous test; to save computational time, only $4$ steps were carried out in the normalization process. The numerical propagation was performed with \textit{MATLAB} using the function \textit{ode45}. Figure \ref{fig: smaeccGraph} shows the maximum relative errors obtained in function of the initial values of the semi-major axis and eccentricity $(a_0,e_0)$. To interpret the results, we also computed the stability map shown in Figure \ref{fig: stabilitymap}. It was obtained by computing the Fast Lyapunov Indicators (FLI) \cite{Lega1997} for orbits with the same initial conditions presented above, using a propagation time equal to $20$ orbital periods. In the Figures \ref{fig: smaGraph}, \ref{fig: eccGraph} and \ref{fig: stabilitymap}, the red and black lines are the same as described above. We observe that in the domain left to the black line, the error is generally small, except along the vertical strips corresponding to mean motion resonances and their neighbourhood, similar as in Figure \ref{fig: errorGraph}. The stability map (Figure \ref{fig: stabilitymap}) confirms these features due to  mean motion resonances. Figure \ref{fig: smaeccGraph} shows that the error of the error increases, in general, as $a$ and $e$ increase. For higher values of $a$, the error is mostly dominated by the truncation level of the multipolar expansion. Fixing $a$, the error as $e$ increases is regulated, instead, by the choice of maximum normalization order.  

From the tests performed, we can conclude that with the adopted truncation and normalization orders the method produces accurate results up to an initial semi-major axis $a_0\le
0.6a_P$; for higher values of the initial $a$, either we  accept an higher error or we must increase the order of the multipolar expansion which implies a substantial increase in computational time. On the other hand, the method gives accurate results for still high values of $e_0$, up to almost $0.7$: again more accurate results can be obtained for still higher values of $e_0$, by performing a large number of normalization steps at higher computational cost.  

\paragraph{}
Let us finally remark that, while in the case of the PCR3BP the normalization method described above can be used directly for the computation of the proper semi-major axis and proper eccentricity, the method lends itself conveniently as the starting point for the computation of proper elements also in more complex cases, e.g. when the perturbation effects of external planets are considered or in the more general spatial elliptic R3BP.  

\begin{figure}[h!]
	\begin{subfigure}{0.48\textwidth}
		\includegraphics[width=\textwidth]{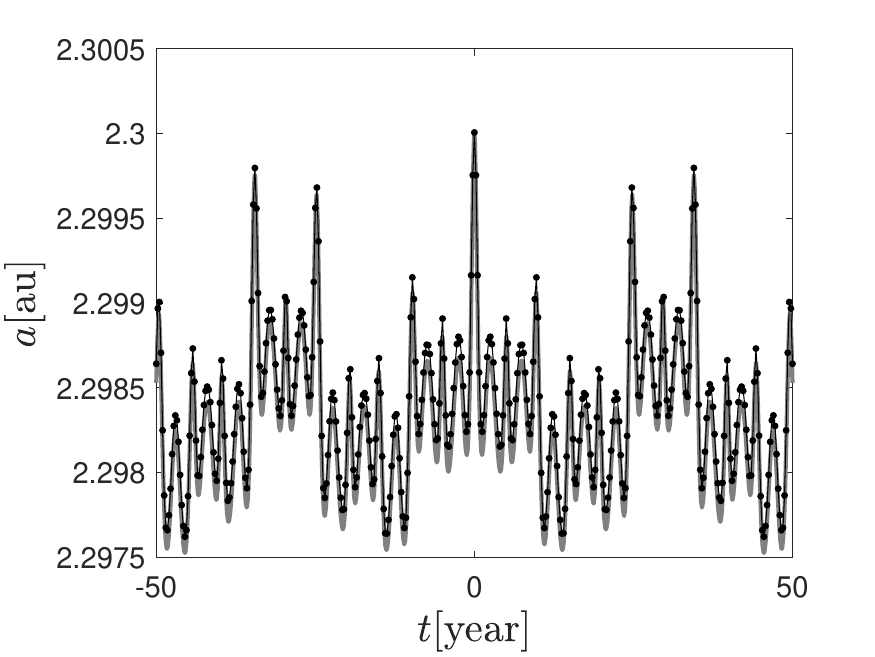}
		\subcaption{case 1 - $a(0)=2.3$ au, $e(0)=0.1$}
		\label{fig: case1}
	\end{subfigure}
	\begin{subfigure}{0.48\textwidth}
		\includegraphics[width=\textwidth]{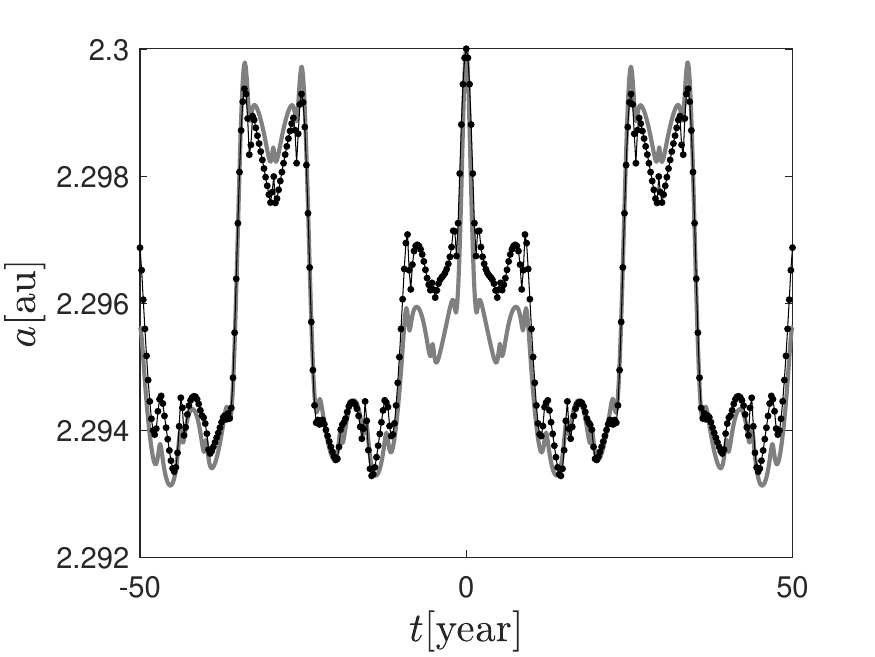}
		\subcaption{case 2 - $a(0)=2.3$ au, $e(0)=0.5$}
		\label{fig: case2}
	\end{subfigure}
	\begin{subfigure}{0.48\textwidth}
		\includegraphics[width=\textwidth]{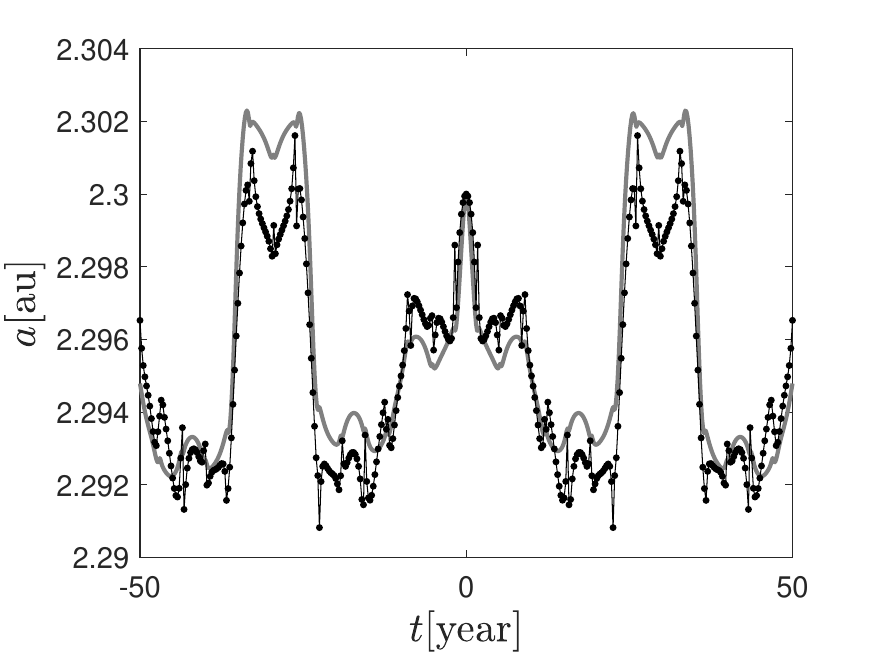}
		\subcaption{case 3 - $a(0)=2.3$ au, $e(0)=0.7$}
		\label{fig: case3}
	\end{subfigure}
	\begin{subfigure}{0.48\textwidth}
		\includegraphics[width=\textwidth]{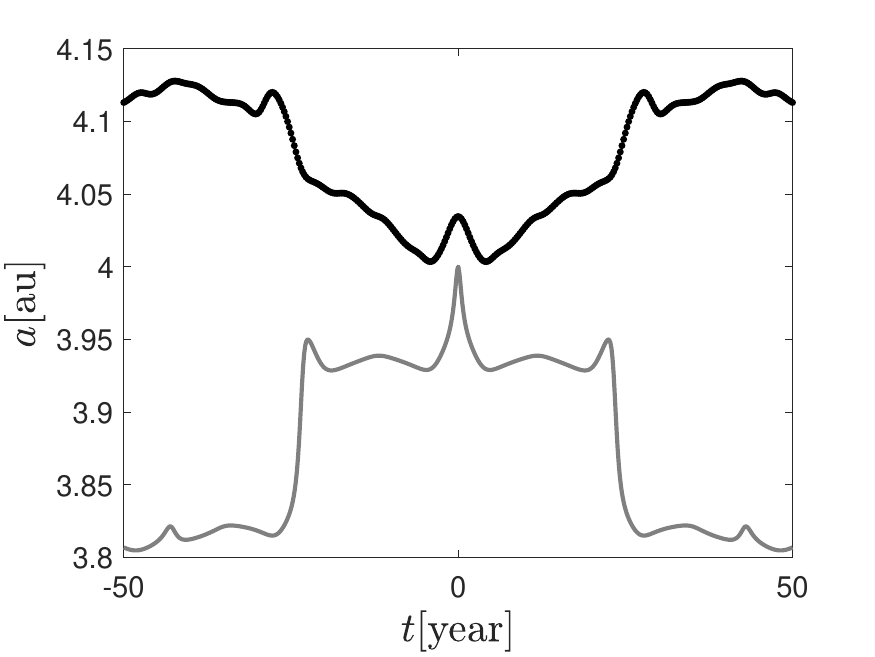}
		\subcaption{case 4 - $a(0)=4$ au, $e(0)=0.1$}
		\label{fig: case4}
	\end{subfigure}
	\caption{Comparison between the evolution of the semi-major axis computed through the normal form and the Lie transformations (black line) and that computed through a numerical propagation (grey line). The normal form and the generating functions are computed by performing $4$ normalization steps.}
	\label{fig: outcomesr3bp}
\end{figure}

\begin{figure}[h!]
	\begin{subfigure}{0.48\textwidth}
		\includegraphics[width=\textwidth]{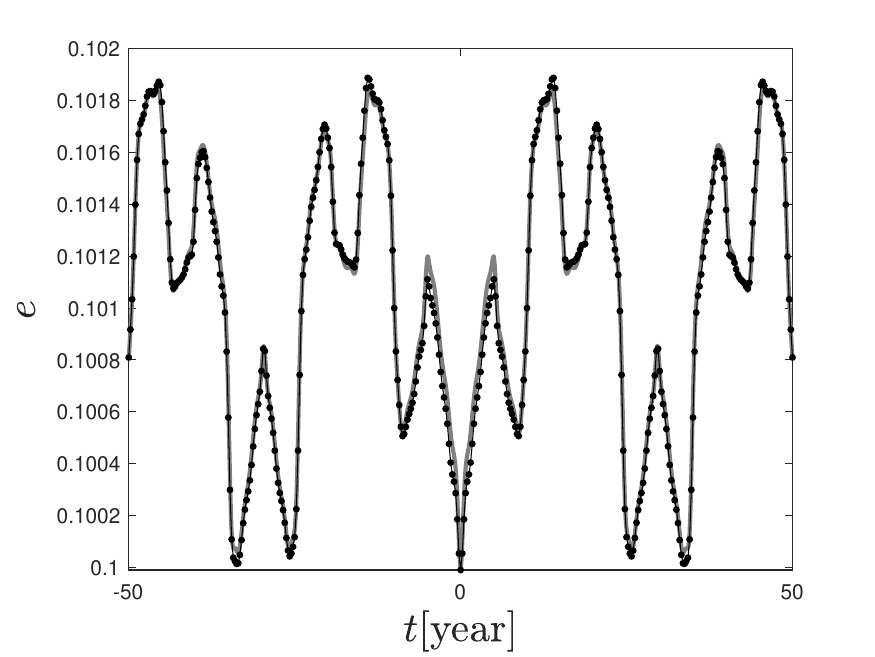}
		\subcaption{case 1 - $a(0)=2.3$ au, $e(0)=0.1$}
		\label{fig: case1ecc}
	\end{subfigure}
	\begin{subfigure}{0.48\textwidth}
		\includegraphics[width=\textwidth]{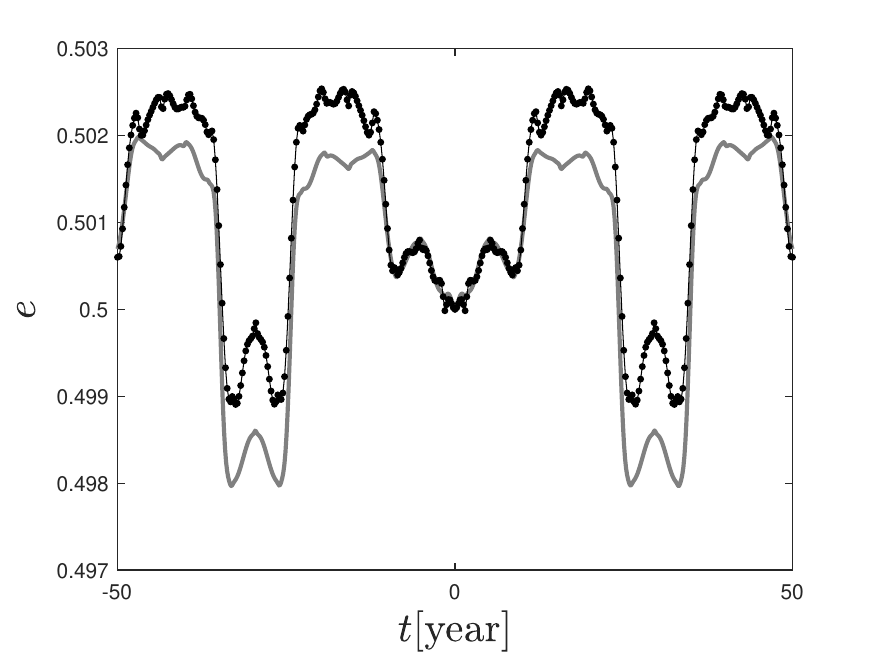}
		\subcaption{case 2 - $a(0)=2.3$ au, $e(0)=0.5$}
		\label{fig: case2ecc}
	\end{subfigure}
	\begin{subfigure}{0.48\textwidth}
		\includegraphics[width=\textwidth]{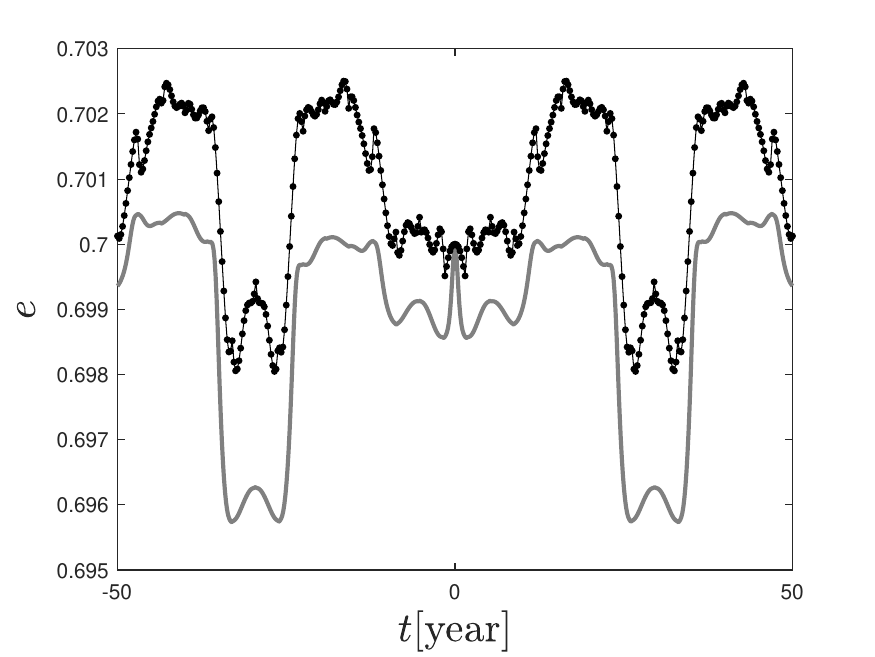}
		\subcaption{case 3 - $a(0)=2.3$ au, $e(0)=0.7$}
		\label{fig: case3ecc}
	\end{subfigure}
	\begin{subfigure}{0.49\textwidth}
		\includegraphics[width=\textwidth]{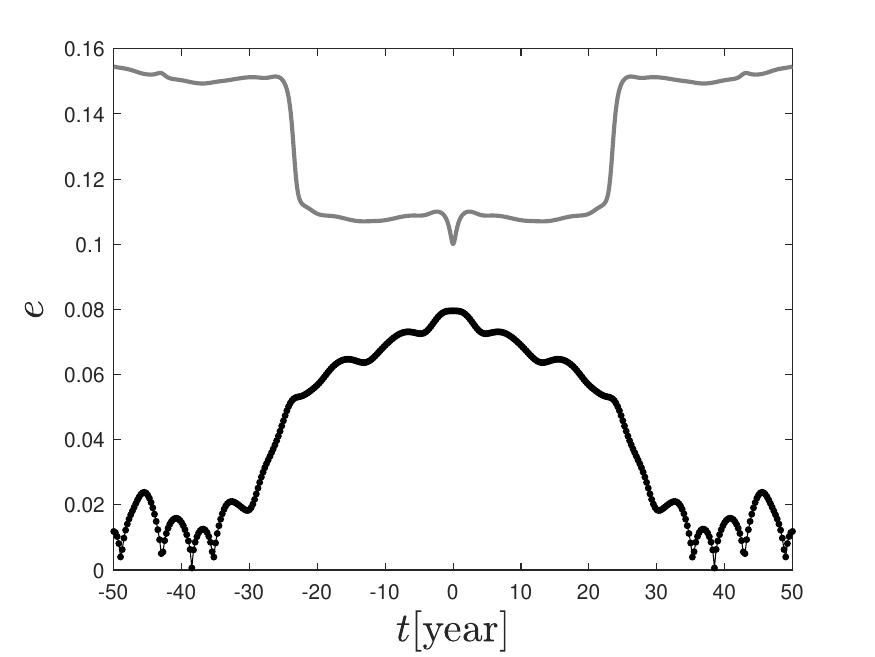}
		\subcaption{case 4 - $a(0)=4$ au, $e(0)=0.1$}
		\label{fig: case4ecc}
	\end{subfigure}
	\caption{Comparison between the evolution of the eccentricity computed through the normal form and the Lie transformations (black line) and that computed through a numerical propagation (grey line). The normal form and the generating functions are computed by performing $4$ steps of the normalization method.}
	\label{fig: outcomesr3bp_ecc}
\end{figure}

\begin{figure}[h!]
	\begin{subfigure}{0.49\textwidth}
			\centering
		\includegraphics[width=\textwidth]{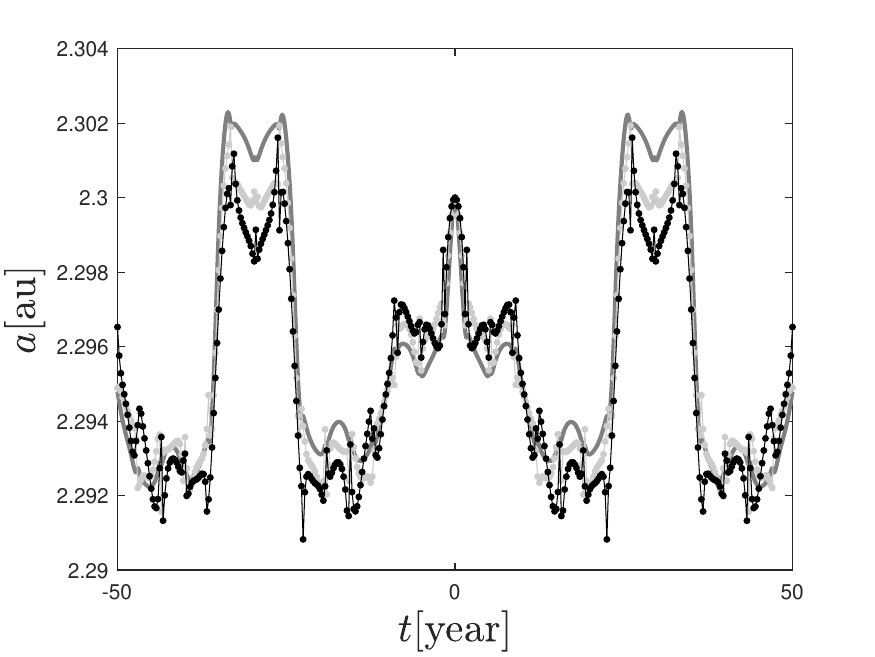}
		\subcaption{semi-major axis}
	\end{subfigure}
\begin{subfigure}{0.49\textwidth}
	\centering
	\includegraphics[width=\textwidth]{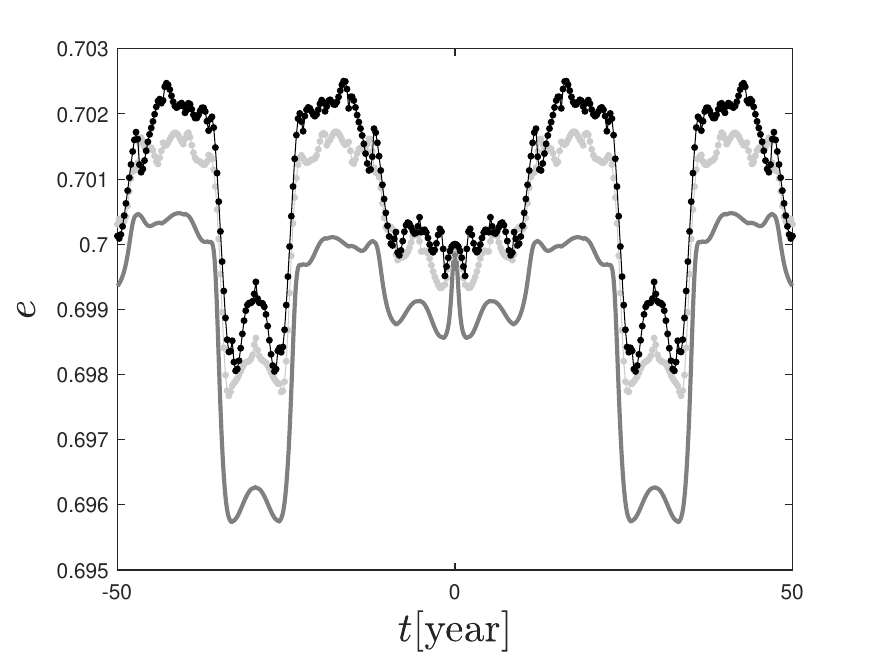}
	\subcaption{eccentricity}
\end{subfigure}
	\caption{Comparison between the evolution of the semi-major axis and the eccentricity computed through the normal form and the Lie transformations and those computed through a numerical propagation (grey line).The normal form and the generating functions are computed by performing $7$ (light grey line) and $4$ normalization steps (black line). Here, $a(0)=2.3$ au, $e(0)=0.7$.}
	\label{fig: outcomesr3bp_morestep}
\end{figure}

\section{Numerical application in the planar elliptic R3BP}
\label{section: outcomesR3BP}
We reproduced the second test described in Section \ref{section: outcomesPCR3BP} in the more general of the planar elliptic R3BP selecting few initial conditions for the particle's trajectory. In particular, we considered the following cases

\begin{itemize}
	\item case 1 : $a_0=2.3$ au  $\sim 0.44 a_P$, $e_0=0.1$;
	\item case 2 : $a_0=2.3$ au  $\sim 0.44 a_P$, $e_0=0.5$;
	\item case 3 : $a_0=2.3$ au  $\sim 0.44 a_P$, $e_0=0.7$;
	\item case 4 : $a_0=4$ au  $\sim 0.77 a_P$, $e_0=0.1$.
\end{itemize}

In all cases also impose $\delta \Lambda(0)=0$, $\omega(0) = 90^{\circ}$, $M(0)=90^{\circ}$, $i(0)=20^{\circ}$, $\Omega(0)=0^{\circ}$, $\lambda_P(0)=0^{\circ}$. The number of terms in the initial disturbing function is higher than in the case of the PCR3BP; to keep the number of operations relatively low we perform a multipolar expansion of order $5$. We carried out $4$ steps during the normalization process as in the numerical examples for the PC3BP. Figure \ref{fig: outcomesr3bp} shows the outcomes obtained for the semi-major. The method works well in the first three cases yielding a maximum relative error $\sim 10^{-4.3}$ for case 1 and $\sim 10^{-3.7}$ in the other two cases. Instead, in case 4 the method does not work properly; indeed the maximum relative error we get is $10^{-1}$. Similarly, for the eccentricity (Figure \ref{fig: outcomesr3bp_ecc}) the maximum relative error is $\sim 10^{-3.9}$ for case 1, $\sim 10^{-3.7}$ for case 2, $\sim 10^{-2.6}$ for case 3 and $\sim 10^{-0.12}$ for case 4. 

These results generally confirm the conclusions obtained for the PCR3BP. The method is able to produce accurate outcomes also for high eccentricity if the initial semi-major axis $a_0$ is sufficiently lower than $a_P$. For high values of $a_0$ the relative error depends also on the maximum order of the multipolar expansion of the original Hamiltonian. Increasing the multiple order produces a lower error, but also causes a significant increase of the computational time. Moreover, for a fixed normalization order the error increases with the orbital eccentricity: to rectify this trend it is necessary to perform a larger number of steps during the normalization process as the initial value of the eccentricity grows. For example, repeating the test for case 3, but performing $7$ steps of the normalization process, the resulting maximum relative error decreases: it reduces to $\sim 10^{-4.3}$ for the semi-major axis and to $\sim 10^{-2.8}$ for the eccentricity (see Figure \ref{fig: outcomesr3bp_morestep}).

\appendix

\section{Proof of Lemma \ref{lemmaDR}}
\label{Applemmaproof}
Consider the expression of the initial Hamiltonian (equation \eqref{H3D}) before the introduction of the book-keeping parameter. The terms obtained after the expansion of the semi-major axis $a=a^*+\delta a$, come from two parts: i) the initial Keplerian term, and ii) $\mathsf{R}$ determined through the multipole expansion of the planet's tidal potential (see \eqref{tidalpotentialme}). Moreover, the Hamiltonian contains the term $n_PI_P$. Regarding $\mathsf{R}$, we obtain the following:

\paragraph{Analysis of $\bm{\mathsf{R}}$}
From equation \eqref{tidalpotentialme}, $\mathsf{R}$ contains terms of the type  
\[
C_{j,k} \frac{r^{k-1}\cos^{j}\alpha}{r_P^{k}}, \qquad k\in\mathbb{Z}^+, k\ge 3, \qquad C_k\in\mathbb{Q}, \qquad j\in\mathbb{Z}^+.
\]
We have 
\begin{equation}
\cos \alpha = \frac{\br\cdot\br_P}{rr_P}=\frac{T_0+T_1+T_2}{r} 
\label{cosalpha}
\end{equation}
where 
\begin{equation}
\small
\begin{split}
T_0 = &  +\frac{1}{4}{a(\eta+1)}\big(\cos i + 1\big)\cos(-u+f_P-\omega-\Omega) -\frac{1}{4}{a(\eta+1)}\big(\cos i - 1\big)\cos(u+f_P+\omega-\Omega),\\
T_1 = & -\frac{1}{2}{a}e\big(\cos i +1\big)\cos(-\omega+f_P-\Omega)+\frac{1}{2}{a}e\big(\cos i -1\big)\cos(\omega+f_P-\Omega), \\
T_2 = & \frac{1}{4}\frac{ae^2(\cos i + 1)}{1+\eta}\cos(u+f_P-\omega-\Omega)-\frac{1}{4}\frac{ae^2(\cos i - 1)}{1+\eta}\cos(-u+f_P+\omega-\Omega).
\end{split}
\label{T0T1T2def}
\end{equation}
By performing the transformation \eqref{transfLemma1}, we obtain
\[
\begin{split}
T_0 = &  +\frac{1}{4}{a(\eta+1)}\big(\cos i + 1\big)\cos(-L_E+L_{T,P}) -\frac{1}{4}{a(\eta+1)}\big(\cos i - 1\big)\cos(L_E+L_{T,P}-2\Omega),\\
T_1 = & -\frac{1}{2}{a}e\big(\cos i +1\big)\cos(-\bar{\omega}+L_{T,P})+\frac{1}{2}{a}e\big(\cos i -1\big)\cos(\bar{\omega}+L_{T,P}-2\Omega), \\
T_2 = & \frac{1}{4}\frac{ae^2(\cos i + 1)}{1+\eta}\cos(L_E+L_{T,P}-2\bar{\omega})-\frac{1}{4}\frac{ae^2(\cos i - 1)}{1+\eta}\cos(-L_E+L_{T,P}+2\bar{\omega}-2\Omega).
\end{split}
\]
Then, $\cos\alpha$ fulfills the D'Alembert rules \eqref{d'al1}, \eqref{d'al2}, \eqref{d'al3}. 
Moreover, since
\[
r = a(1-e\cos u)=a(1-e \cos(L_E-\bar{\omega})),\quad \frac{1}{r_P}=\frac{1+e_P\cos f_P}{a_P\eta_P^2}=\frac{1+e_P\cos (L_{T,P}-\bar{\omega}_P)}{a_P\eta_P^2},
\]
it follows that $r$ and $1/r_P$ also fulfill the D'Alembert rules. 

The product between terms fulfilling the D'Alembert rules fulfils them as well.  
It follows that the terms coming from $\mathsf{R}$ in $\mathcal{H}$ fulfill the D'Alembert rules.

\paragraph{}
Now, all the terms of $\mathsf{R}$
are multiplied by $Q$ defined in \eqref{Qdef} (RM-reduction). $Q$ results from the expansion of the semi-major axis in 
\[
\frac{a(1-e\cos u)}{r}=\frac{a(1-e\cos(L_E-\bar{\omega}))}{r}.
\]
Thus, $Q$ fulfils the  D'Alembert rules, implying that the product $\mathsf{R}Q$ fulfils the  D'Alembert rules as well.

Finally, the Lie transformation preserves the d'Alembert rules. Thus, all the Hamiltonians computed throughout the normalization process fulfill the D'Alembert rules. 

\hfill\(\Box\)

\section{Computation of the average of the disturbing function or the generating functions $\bm{ \chi_{s_0+j-1}^{(j)}}$}
\label{AppAV}
We report some useful formulas to apply for the computation of the average of the disturbing function and of any generatring function with respect to $\lambda$. Since
\[
d \lambda = d M,
\]
the average of any trigonometric quantity over $\lambda$ coincides with the average over the mean anomaly $M$. 

From \cite{Kelly1989}, we have
\[
\frac{1}{2\pi}\int_{0}^{2\pi} \cos(ku + \nu) dM = \left\{ \begin{array}{rcl}
-\frac{e}{2}\cos \nu & \mbox{if} & |k|=1\\ 0 & \mbox{if} & |k|\neq 1 
\end{array}\right.
\]
\[
\frac{1}{2\pi}\int_{0}^{2\pi}\sin(ku + \nu) dM = \left\{ \begin{array}{rcl}
-\frac{e}{2}\sin \nu & \mbox{if} & |k|=1\\ 0 & \mbox{if} & |k|\neq 1 
\end{array}\right.
\]
\[
\frac{1}{2\pi}\int_{0}^{2\pi} (u-M) dM = 0
\]
where $k\in\mathbb{Z}$ and $\nu$ is a generic angle.

 From \cite{Kelly1989} and \cite{Kozai1962} we have the following formulas applicable whenever $f$ is used in place of $u$:
\[
\frac{1}{2\pi}\int_{0}^{2\pi} \cos(kf) dM = \frac{(-e)^{|k|}(1+|k|\eta)}{(1+\eta)^{|k|}},
\]
\[
\frac{1}{2\pi}\int_{0}^{2\pi} \sin(kf) dM = 0,
\]
\[
\frac{1}{2\pi}\int_{0}^{2\pi} \cos(kf+\nu) dM = \frac{(-e)^{|k|}(1+|k|\eta)}{(1+\eta)^{|k|}}\cos\nu,
\]
\[
\frac{1}{2\pi}\int_{0}^{2\pi} \sin(kf+\nu) dM = \frac{(-e)^{|k|}(1+|k|\eta)}{(1+\eta)^{|k|}}\sin\nu,
\]
\[
\frac{1}{2\pi}\int_{0}^{2\pi} (f-M) dM = 0.
\]

\section{Example of the normalization algorithm for $\bm{s_0>1}$}
\label{examplenormprocss}
We give below a detailed example of the proposed normalization algorithm in the generic case $s_0>1$ (i.e. $e>>{m}_P/\mathcal{M}$). We consider a toy Hamiltonian in which the initial disturbing function $R$ is given by the quadrupolar expansion (see \eqref{tidalpotentialme}): 
\[
\mathcal{H}^{(0)}=n^*\delta\Lambda+n_PI_P+R, \quad R = \epsilon^{s_0}{R}_{s_0}^{(0)}+\epsilon^{s_0+1}{R}_{s_0+1}^{(0)}+\mathcal{O}(\epsilon^{s_0+2}),
\]
where 
\begin{equation}
\begin{split}
{R}_{s_0}^{(0)}= & \frac{1}{64}\frac{\mu{\aV}^3}{r a_P^3\eta_P^6} \Big(32 -6{\etaVP}^2C_1 -6{\etaVP}^2C_2\cos(2f_P-2\Omega) -6{\etaVP}^2C_2\cos(2\uV+2\omega)\\
&-3{\etaVP}^2C_3\cos(2f_P-2\uV-2\Omega-2\omega) -3{\etaVP}^2C_4\cos(2f_P+2\uV -2\Omega+2\omega)\Big) -\frac{3}{2}\frac{\delta \Lambda^2}{\aV r}
\end{split}
\label{exR1}
\end{equation}

\begin{equation}
\begin{split}
{R}_{s_0+1}^{(0)}= & \frac{3}{128}\frac{\mu{\aV}^3}{ra_P^3\eta_P^6}\Big(4e_P\big(16-3\eta_+^2C_1\big)\cos f_P+4e\big(\eta_+(4+\eta_+)C_1-16\big)\cos u\\
&+\cos(f_P+2u+2\omega-2\Omega)+\cos(3f_P+2u+2\omega-2\Omega)\big) \\
&+e\eta_+^2C_3\cos(2f_P-3u-2\omega-2\Omega) +e\eta_+^2C_4\cos(2f_P+3u+2\omega-2\Omega)\\
&+2e\eta_+^2C_2\cos(3u+2\omega) -6e_P\eta_+^2C_2\cos(f_P-2u-2\omega)\\
&-6e_P\eta_+^2C_2\cos(f_P+2u+2\omega)-6e_P\eta_+C_2\cos(f_P-2h)\\ 
&-6e_P\eta_+C_2\cos(3f_P-2h)
+ 2e\eta_+C_2\big(8+\eta_+\big)\cos(u+2\omega)\\
&+2e\eta_+C_2\big(4+\eta_+\big)\cos(2f_P-u-2\Omega) +2e\eta_+C_2\big(4+\eta_+\big)\cos(2f_P+u-2\Omega)\\
& -3e_P\eta_+^2C_3\cos(f_P-2u-2\omega-2\Omega) -3e_P\eta_+^2C_3\cos(3f_P-2u-2\omega-2\Omega)\\
&+e\eta_+(8+\eta_+)C_3
\cos(2f_P-u-2g-2h)+e\eta_+(8+\eta_+)C_4\cos(2f_P+u+2g-2h)  \Big)
\end{split}
\label{exR2}
\end{equation}
with
\[
\etaVP=1+\etaV,\qquad C_1 = 1+\cos^2i,\qquad C_2 = 1-\cos i,\qquad C_3 = (1+\cos i)^2,\qquad C_4 = (1-\cos i)^2.
\]
Note that in ${R}_{s_0}^{(0)}$, the terms
\[
\frac{1}{2}\frac{\mu{\aV}^3}{r a_P^3\eta_P^6}, \qquad -\frac{3}{32}\frac{\mu{\aV}^3}{r a_P^3\eta_P^6}\etaVP^2C_1, \qquad -\frac{3}{2}\frac{\delta \Lambda^2}{\aV r}
\]
are of type 1, while all the others are of the type 2 (see section \ref{section: normalization} for the definition of type 1 and 2); ${R}_{s_0+1}^{(0)}$ has only terms of type 2.

At the first step of the iterative method, the goal is to determine the function $\chi_{s_0}^{(1)}$ to normalize $\mathcal{H}^{(0)}$ up to the order $s_0$ in $\epsilon$: then, the term to be normalized is $R^{(0)}_{s_0}$. In view of equation \eqref{exR1}, we obtain the following generating function  $\chi_{s_0}^{(1)}$:
\[
\begin{split}
\chi_{s_0}^{(1)}= & \frac{1}{64}\frac{\mu{\aV}^2}{a_P^3\eta_P^6} \Bigg(\frac{32}{n^*}\phi\epsilon -\frac{6\eta_{+}^2C_1}{n^*}\phi\epsilon -\frac{6\eta_{+}^2C_2\sin(2f_P-2\Omega)}{2n_P} -\frac{6\eta_{+}^2C_2\sin(2u+2\omega)}{2n^*}\\
&-\frac{3\eta_{+}^2C_3\sin(2f_P-2u-2\Omega-2\omega)}{-2n^*+2n_P} -\frac{3\eta_{+}^2C_4\sin(2f_P+2u-2\Omega+2\omega)}{2n^*+2n_P}\Bigg)\\
& -\frac{3}{2} \frac{\delta \Lambda^2}{{\aV}^2}\phi\epsilon. 
\end{split}
\]  
The homological equation (equations \eqref{homeq} and \eqref{chieq}) contains the normal form terms
\begin{equation}
Z_{s_0}= \frac{1}{32}\frac{\mu{\aV}^2}{a_P^3\eta_P^6} \Bigg(16 -3\eta_{+}^2C_1\Bigg) -\frac{3}{2}\frac{\delta \Lambda^2}{\aV}
\label{dafpmp}
\end{equation}
and the remainder terms (of order lower than $2s_0$) given by 
\[
\begin{split}
\hat{{R}}_{(s_0+1)}^{(1)}=& \frac{3}{128}\frac{\mu{\aV}^3}{ra_P^3\eta_P^9}\Bigg(2\frac{n_P}{n_P-n^*}e_P\eta_+^2C_3\Big(\cos(-2u+3f_P-2\omega-2\Omega)\\ &+\cos(-2u+f_P-2\omega-2\Omega)\Big)
+2\frac{n_P}{n_P+n^*}e_P\eta_+^2C_4\Big(\cos(2u+3f_P+2\omega-2\Omega)\\
&+\cos(2u+f_P+2\omega-2\Omega)\Big)
-\frac{n_P}{n_P-n^*}\eta_P^3e\eta_+^2C_3\Big(\cos(-3u+2f_P-2\omega-2\Omega)\\
&+\cos(-u+2f_P-2\omega-2\Omega)\Big)
-\frac{n_P}{n_P+n^*}\eta_P^3e\eta_+^2C_3\Big(\cos(3u+2f_P+2\omega-2\Omega)\\
& +\cos(u+2f_P+2\omega-2\Omega)\Big)
+ 4e_P\eta_+^2C_2\Big(\cos(3f_P-2\Omega)+\cos(f_P-2\Omega)\Big) \\
&-2\eta_P^3e\eta_+^2C_2\Big(\cos(-u+2f_P-2\Omega)
+\cos(u+2f_P-2\Omega)\Big) \Bigg),\end{split}\]
\[
\begin{split}
\hat{{R}}_{(s_0+2)}^{(1)}= & \frac{3}{256}\frac{\mu{\aV}^3}{ra_P^3\eta_P^9}\Bigg( \frac{n_P}{n_P-n^*}e_P^2\eta_+^2C_3\Big(\cos(2u+2\omega+2\Omega)+\cos(-2u+4f_P-2\omega-2\Omega)\Big)\\
&\frac{n_P}{n_P+n^*}e_P^2\eta_+^2C_4\Big(\cos(2u+4f_P+2\omega-2\Omega)+\cos(2u+4f_P+2\omega-2\Omega)\Big)\\
& -2\frac{n_P}{n_P-n^*}e_Pe\eta_+^2C_3\Big(\cos(-3u+3f_P-2\omega-2\Omega)+\cos(-u+3f_P-2\omega-2\Omega)\\
&+\cos(-3u+f_P-2\omega-2\Omega)+\cos(-u+f_P-2\omega-2\Omega)\Big)\\
&-2\frac{n_P}{n_P+n^*}e_Pe\eta_+^2C_4\Big(\cos(3u+3f_P+2\omega-2\Omega)+\cos(u+3f_P+2\omega-2\Omega)\\
&+\cos(3u+f_P+2\omega-2\Omega)+\cos(u+f_P+2\omega-2\Omega)\Big)\\
& +2\frac{n_P}{n_P-n^*}(2+e_P^2-2\eta_P^3)\eta_+^2C_3\cos(-2u+2f_P-2\omega-2\Omega)\\
&+2\frac{n_P}{n_P+n^*}(2+e_P^2-2\eta_P^3)\eta_+^2C_3\cos(2u+2f_P+2\omega-2\Omega)\\
&+2e_P^2\eta_+^2C_2cos(2\Omega) +2e_P^2\eta_+^2C_2\cos(4f_P-2\Omega)\\
\end{split}
\]
\[
\begin{split}
&+4(2+e_P^2-2\eta_P^3)\eta_+^2C_2\cos(2f_P-2\Omega)-4e_Pe\eta_+^2C_2\Big(\cos(-u+3f_P-2\Omega)\\
&+\cos(u+3f_P-2\Omega)+ \cos(-u+f_P-2\Omega)+\cos(u+f_P-2\Omega)\Big)\Bigg),\\
\end{split}
\]
\[
\begin{split}
\hat{{R}}_{(s_0+3)}^{(1)}= & -\frac{3}{512}\frac{\mu{\aV}^3}{ra_P^3\eta_P^9}\Bigg(\frac{n_P}{n_P-n^*}e_P^2e\eta_+^2C_3\Big(\cos(u+2\omega+2\Omega)+\cos(3u+2\omega+2\Omega)\\
&+\cos(-3u+4f_P-2\omega-2\Omega)+\cos(-u+4f_P-2\omega-2\Omega)\Big)\\
&+2e_P^2e\eta_+^2C_2\Big(\cos(-u+2\Omega)+\cos(u+2\Omega)+\cos(-u+4f_P-2\Omega)\\ &+\cos(u+4f_P-2\Omega)\Big) 
+4(2+e_P^2-2\eta_P^3)e\eta_+^2C_2\Big(\cos(-u+2f_P-2\Omega)\\ &+\cos(u+2f_P-2\Omega)\Big)
+\frac{n_P}{n_P+n^*}e_P^2e\eta_+^2C_4\Big(\cos(u+2\omega-2\Omega)+\cos(3u+2\omega-2\Omega)\\
\end{split}
\]
\[
\begin{split}
&+\cos(3u+4f_P+2\omega-2\Omega)+\cos(u+4f_P+2\omega-2\Omega)\Big)\\
& +2\frac{n_P}{n_P-n^*}(2+e_P^2-2\eta_P^3)e\eta_+^2C_3\Big(\cos(-3u+2f_P-2\omega-2\Omega)\\ &+\cos(-u+2f_P-2\omega-2\Omega)\Big) +2\frac{n_P}{n_P+n^*}(2+e_P^2-2\eta_P^3)e\eta_+^2C_4\Big(\cos(3u+2f_P+2\omega-2\Omega)\\ &+\cos(u+2f_P+2\omega-2\Omega)\Big)\Bigg).\\
\end{split}
\]
The new Hamiltonian is
\[
\mathcal{H}^{(1)}=\exp(\mathcal{L}_{\chi_{s_0}^{(1)}})\mathcal{H}^{(0)}=Z_0+Z_{s_0}+\sum_{s=s_0+1}^{s_m}\epsilon^s{{R}}_{s}^{(1)};
\]
For all $ s\in [s_0+1,s_m]$ ${{R}}_{s}^{(1)}$ contains the following contributions : i) ${{R}}_{(s)}^{(0)}$; ii) the remainder of the homological equation; iii) the terms generated by the Lie transformation. From the previous considerations, we have that these last ones have book-keeping order equal to or larger than $2s_0$.

At the second step, the procedure is repeated with the goal of determining the generating function $\chi_{s_0+1}^{(2)}$ to normalize $\mathcal{H}^{(1)}$ up to the order $s_0+1$ in $\epsilon$. The term to normalize is
\[
R_{s_0+1}^{(1)}=R_{s_0+1}^{(0)}+\hat{R}_{s_0+1}^{(1)},
\]
with ${R}_{s_0+1}^{(0)}$ given in \eqref{exR2} and $\hat{R}_{s_0+1}^{(1)}$ equal to the order $s_0+1$ (in $\epsilon$) term of the remainder computed at the previous step. The generating function $\chi_{s_0+1}^{(2)}$ then is computed as
\begin{equation*}
\begin{split}
\chi_{s_0+1}^{(2)}= & \frac{3}{128}\frac{\mu{\aV}^2}{ra_P^3\eta_P^6}\Bigg(4e_P\big(16-3\eta_+^2C_1\big)\frac{\sin f_P}{n_P}+4e\big(\eta_+(4+\eta_+)C_1-16\big)\frac{\sin u}{n^*}\\
&+\frac{\sin(f_P+2u+2\omega-2\Omega)}{2n^*+n_P}+\frac{\sin(3f_P+2u+2\omega-2\Omega)}{2n^*+3n_P}\big) \\
&+e\eta_+^2C_3\frac{\sin(2f_P-3u-2\omega-2\Omega)}{2n_P-3n^*} +e\eta_+^2C_4\frac{\sin(2f_P+3u+2\omega-2\Omega)}{3n^*+2n_P}\\
&+2e\eta_+^2C_2\frac{\sin(3u+2\omega)}{3n^*} -6e_P\eta_+^2C_2\frac{\sin(f_P-2u-2\omega)}{n_P-2n^*}\\
&-6e_P\eta_+^2C_2\frac{\sin(f_P+2u+2\omega)}{n_P+2n^*}-6e_P\eta_+C_2\frac{\sin(f_P-2h)}{n_P}\\
&-6e_P\eta_+C_2\frac{\sin(3f_P-2h)}{3n_P}
+ 2e\eta_+C_2\big(8+\eta_+\big)\frac{\sin(u+2\omega)}{n^*}\\
&+2e\eta_+C_2\big(4+\eta_+\big)\frac{\sin(2f_P-u-2\Omega)}{2n_P-n^*} +2e\eta_+C_2\big(4+\eta_+\big)\frac{\sin(2f_P+u-2\Omega)}{2n_P+n^*}\\
& -3e_P\eta_+^2C_3\frac{\sin(f_P-2u-2\omega-2\Omega)}{n_P-2n^*} -3e_P\eta_+^2C_3\frac{\sin(3f_P-2u-2\omega-2\Omega)}{3n_P-2n^*}\\
&+e\eta_+(8+\eta_+)C_3
\frac{\sin(2f_P-u-2g-2h)}{2n_P-n^*}+e\eta_+(8+\eta_+)C_4\frac{\sin(2f_P+u+2g-2h)}{2n_P+n^*} \\
& + 2\frac{n_P}{n_P-n^*}e_P\eta_+^2C_3\Big(\frac{\sin(-2u+3f_P-2\omega-2\Omega)}{3n_P-n^*}\\
 &+\frac{\sin(-2u+f_P-2\omega-2\Omega)}{n_P-2n^*}\Big)
+2\frac{n_P}{n_P+n^*}e_P\eta_+^2C_4\Big(\frac{\sin(2u+3f_P+2\omega-2\Omega)}{2n^*+3n_P}\\
&+\frac{\sin(2u+f_P+2\omega-2\Omega)}{2n^*+n_P}\Big)
-\frac{n_P}{n_P-n^*}\eta_P^3e\eta_+^2C_3\Big(\frac{\sin(-3u+2f_P-2\omega-2\Omega)}{2n_P-3n^*}\\
\end{split}
\end{equation*}
\begin{equation*}
\begin{split}
&+\frac{\sin(-u+2f_P-2\omega-2\Omega)}{2n_P-n^*}\Big)
-\frac{n_P}{n_P+n^*}\eta_P^3e\eta_+^2C_3\Big(\frac{\sin(3u+2f_P+2\omega-2\Omega)}{3n^*+2n_P}\\
& +\frac{\sin(u+2f_P+2\omega-2\Omega)}{n^*+2n_P}\Big)
+ 4e_P\eta_+^2C_2\Big(\frac{\sin(3f_P-2\Omega)}{3n_P}+\frac{\sin(f_P-2\Omega)}{n_P}\Big) \\
&-2\eta_P^3e\eta_+^2C_2\Big(\frac{\sin(-u+2f_P-2\Omega)}{2n_P-n^*}
+\frac{\sin(u+2f_P-2\Omega)}{n^*+2n_P}\Big) \Bigg),\\
\end{split}
\end{equation*}
while we also have
\[
Z_{s_0+1}=0.
\]
Subsequent steps can be computed using analogous formulas. 

\section{Proof of Proposition \ref{PropositionImportant}}
\label{AppPropProof}

\paragraph{i)}
Consider the $j$-th normalization step in the case $s_0=1$.  From Remark \ref{remarkCriticalTerms} it follows that the Poisson bracket $\{Z_{s_0},\chi_j^{(j)}\}$ yields terms of order $s_0+j-1=j$ through formula \eqref{c_orderm1}, by taking $A_1=Z_1$ and $A_2=\chi_{j}^{(j)}$.  However, since $Z_1$ is a normal form term, it does not depend on $u$ and $r$. Moreover, since it has book-keeping order equal to $s_0=1$, it does not depend explicitly on $e$ (as a consequence of the adopted book-keeping rules). By Lemma \ref{lemmaDR}, we conclude that it does not depend on $\omega$. It follows that the term of book-keeping order $j$ coming from  $\{Z_{s_0},\chi_j^{(j)}\}$ is equal to zero. 

\paragraph{ii,iii)}
To show the second and third points of Proposition \ref{PropositionImportant}, we use the following lemma:
\begin{lemma}
	All terms in $R_2^{(1)}$ are of the following two types:  
	\begin{itemize}
		\item terms of type A, not depending on the eccentricity $e$ and $\phi$; 
		\item terms of type B, linearly depending on the eccentricity $e$ (and not depending on $\phi$). 
	\end{itemize} 
	All terms of type A are of the form \eqref{A1k1k3cos}. All terms of type B depend on the eccentric anomaly $u$, the true anomaly $f_P$, or both. 
	\label{lemmacentrale}
\end{lemma}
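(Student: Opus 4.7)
Since $s_0=1$ is the operative regime for this lemma, I would split $R_2^{(1)}$ into the three contributions listed in items (i)--(iii) of the algorithm description: (a) the original order-$\epsilon^2$ piece $R_2^{(0)}$ inherited from $\mathcal{H}^{(0)}$; (b) the order-$\epsilon^2$ residual produced by the truncation of $\{Z_0,\chi_1^{(1)}\}$ in the homological equation at the first step; (c) the order-$\epsilon^2$ content of the Lie-bracket series $\{R_s^{(0)},\chi_1^{(1)}\} + \tfrac12\{\{\mathcal{H}^{(0)},\chi_1^{(1)}\},\chi_1^{(1)}\}+\ldots$. The claim is then verified on each.

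For (a), the book-keeping rules force every order-$\epsilon^2$ monomial to carry a single extra small factor ($e$, $e_P$, or $\delta\Lambda$) beyond the $\mu$ or $\delta\Lambda^2$ baselines. The preparation of the Hamiltonian (multipole expansion followed by RM-reduction) never introduces $\phi$, so no such term depends on $\phi$. Monomials with no $e$ satisfy Lemma~\ref{lemmaDR} with $q=0$, hence $m_3=0$; translating from the $(L_E,L_{T,P},\bar\omega,\Omega)$ basis back to $(u,f_P,\omega,\Omega)$ yields $k_3=k_1$, which is precisely the form \eqref{A1k1k3cos}. The linear-in-$e$ monomials inherit their $u$ or $f_P$ trigonometric dependence from the only two places $e$ enters the expansion: the $T_1$ piece of $\cos\alpha$ in \eqref{T0T1T2def}, which always carries $f_P$, and the $r/a=1-e\cos u$ factor generated by RM-reduction, which always carries $u$.

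For (b), subtracting \eqref{chieq} from the full bracket \eqref{pbexpr} leaves three residual pieces to study. The term $-n^*(a^{*2}e\sin u/r)\epsilon\,\partial\chi_1^{(1)}/\partial r$ vanishes because both the type-1 and type-2 pieces of $\chi_1^{(1)}$ have no explicit $r$-dependence. The term $-n_P(2e_P\cos f_P/\eta_P^3)\epsilon (a^*/r)\,\partial\chi_1^{(1)}/\partial f_P$ yields an $e_P$-linear type A contribution carrying $f_P$ dependence, and the term $+n_P(a^*\epsilon e\cos u/r)\,\partial\chi_1^{(1)}/\partial f_P$ yields an $e$-linear type B contribution carrying both $u$ and $f_P$. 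None of these introduces $\phi$.

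For (c), Proposition~\ref{pboutcomesorder} applied with $j=k=s_0=1$ guarantees that $\{R_1^{(0)},\chi_1^{(1)}\}$ starts at order $\epsilon^2$ and, more importantly, rules out the order-lowering $\epsilon^{-1}$ factors from $\partial e/\partial\Gamma$ and $\partial\phi/\partial\Gamma$, since neither $R_1^{(0)}$ nor $\chi_1^{(1)}$ depends on $e$ explicitly. Hence every factor of $e$ appearing at order $\epsilon^2$ must come from one of the $\epsilon^{+1}$-loaded derivatives $\partial r/\partial\gamma$, $\partial r/\partial\delta\Lambda$, $\partial\phi/\partial\gamma$, $\partial u/\partial\delta\Lambda$, or from the prescribed substitution $\phi\mapsto e\sin u$; each such source contributes exactly one power of $e$, and two of them would already push the term to order $\epsilon^3$. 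Consequently the order-$\epsilon^2$ output contains at most linear $e$ and no surviving explicit $\phi$, and the iterated brackets beyond the first are of strictly higher order. D'Alembert invariance under Poisson brackets (as used in the proof of Lemma~\ref{lemmaDR}) then certifies that the no-$e$ pieces produced in (c) are of the form \eqref{A1k1k3cos}, with the same translation $m_3=0\Rightarrow k_3=k_1$ as in (a). The main obstacle throughout is (c): one must be sure that no combination of derivatives smuggles an $e^2\cos(\cdot)$ or a residual $\phi$ into the order-$\epsilon^2$ part, and the strengthening of Proposition~\ref{pboutcomesorder} at $j=k=s_0$ is exactly what forbids the otherwise available $\epsilon^{-1}$ mechanisms that could have made room for such terms.
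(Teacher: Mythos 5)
Your decomposition of $R_2^{(1)}$ into the inherited term $R_2^{(0)}$, the residual of the homological equation, and the Lie-bracket contribution is the same as the paper's, and your treatment of the first two pieces essentially matches the paper's proof: the linear-in-$e$ monomials of $R_2^{(0)}$ acquire their $\cos u$ from the expansion $r=a(1-e\cos u)$ (which enters through the multipole numerators $r^{k-1}$ as well as through the RM-reduction factor $Q$, not only the latter) or their $f_P$ from the $T_1$ piece of $\cos\alpha$, and the difference between \eqref{pbexpr} and \eqref{chieq} is exactly the pair of $e$- and $e_P$-linear residuals you list, which vanish when $k_2=0$ and otherwise retain the $f_P$ dependence.

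The gap is in part (c). You conclude that the order-$\epsilon^2$ output of the Lie brackets is ``at most linear in $e$'' and then appeal to the fact that every $e$-producing derivative ($\partial r/\partial\gamma$, $\partial\phi/\partial\gamma$, $\partial u/\partial\delta\Lambda$, the substitution $\phi\mapsto e\sin u$, \dots) carries a factor $\sin u$ or $\cos u$. That does not prove the lemma's claim for those terms: a product such as $e\cos u\cdot\cos(u+\nu)$ contains a harmonic in which the $u$-dependence cancels, so an $e$-linear term independent of both $u$ and $f_P$ is not excluded by your argument (nor by the D'Alembert rules, which allow $q=1$, $m_1=m_2=0$, $m_3=\pm1$). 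Your order count is also internally inconsistent here: in $\{R_1^{(0)},\chi_1^{(1)}\}$ both functions have book-keeping order $1$, so a single $\epsilon^{+1}$-loaded, $e$-carrying derivative already pushes a term to order $\epsilon^3$ unless it is paired with one of the $\epsilon^{-1}$-loaded derivatives $\partial e/\partial\Gamma$, $\partial u/\partial\Gamma$, $\partial\phi/\partial\Gamma$, $\partial r/\partial\Gamma$, each of which carries a compensating factor $1/e$. The paper closes the argument by proving the stronger statement that the order-$2$ part of the Lie-transformation contribution carries \emph{no} factor of $e$ at all (``the eccentricity is simplified'' in the brackets), so that these terms are all of type A and the angular claim for type B is vacuous for this contribution; you need this cancellation, or an equivalent explicit verification, to finish. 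The same remark applies to $\tfrac12\{\{Z_0,\chi_1^{(1)}\},\chi_1^{(1)}\}=\tfrac12\{Z_1-R_1^{(0)}+\dots,\chi_1^{(1)}\}$, which, contrary to your claim that iterated brackets are of strictly higher order, does contribute at order $\epsilon^2$, although with the same $e$-free structure.
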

Since at the second step $R_2^{(1)}$ is the term to be normalized, it follows that all terms in the generating function $\chi_2^{(2)}$ also satisfy Lemma \ref{lemmacentrale}.

\paragraph{}
By Lemma \ref{lemmacentrale}, the normal form term $Z_2$ obtained by normalizing $R_2^{(1)}$ does not depend explicitly on $e$ and $\phi$. Moreover, by definition, any normal form terms cannot depend on $u$. By Lemma \ref{lemmaDR} it cannot depend on $\omega$. Thus 
\[
\frac{\partial Z_{2}}{\partial e}=0, \quad \frac{\partial Z_{2}}{\partial \omega}=0, \quad \frac{\partial Z_{2}}{\partial u}=0, \quad \frac{\partial Z_{2}}{\partial \phi}=0.
\] 
 It follows from Remark \ref{remarkOrderM2} (considering $A_1=Z_{2}$) that $\{Z_{2},\chi_{j}^{(j)}\}\sim \epsilon^{j+1}$ $\forall j>2$. This concludes the demonstration of point {(ii)} of Proposition \ref{PropositionImportant}.

\paragraph{}
To demonstrate point (iii) of Proposition \ref{PropositionImportant}, we finally need to prove that 
\[
\{Z_s,\chi_{2}^{(2,\rm bis)}\}\sim \epsilon^3, \quad \mbox{with} \quad s=1,2, 
\]
\[
\{R_2^{(2)},\chi_{2}^{(2,\rm bis)}\}\sim \epsilon^3. 
\]

Let $R_{2}^{(2)}$ be the terms of book-keeping order $2$ coming from $\{R_{2}^{(1)},\chi_2^{(2)}\}$. Since all terms in $R_{2}^{(1)}$ and in $\chi_2^{(2)}$ are of type A or B, from Remark \ref{k1k3sameremark} it follows that $R_{2}^{(2)}$ is produced by a Poisson bracket of the form
\[
\{T_{R_{2}^{(1)},B},T_{\chi_{2}^{(2)},B}\}
\]
where $T_{R_{2}^{(1)},B}$ and $T_{\chi_{2}^{(2)},B}$ are any terms of type B contained in $R_{2}^{(1)}$ and $\chi_{2}^{(2)}$ respectively. More specifically, from Remark \ref{remarkOrderM2} we have that $R_{2}^{(2)}$ is produced by the part of $\{T_{R_{2}^{(1)},B},T_{\chi_{2}^{(2)},B}\}$ given in \eqref{c_orderm2} with $A_1=T_{R_{2}^{(1)},B}$ and $A_2 =T_{\chi_{2}^{(2)},B}$. However, since $T_{R_{2}^{(1)},B}$ and $T_{\chi_{2}^{(2)},B}$ are linear functions of $e$ (not depending on $\phi$), applying the formulas listed in section \ref{sectionPB} to compute $\{T_{R_{2}^{(1)},B},T_{\chi_{2}^{(2)},B}\}$, we obtain that $R_{2}^{(2)}$ does not depend on $e$. Thus, by Lemma \ref{lemmaDR} $R_{2}^{(2)}$ is made by terms of the form \eqref{A1k1k3cos}. It follows that:
\begin{itemize}
	\item the generating function $\chi_{2}^{(2,\rm bis)}$ contains terms of the form \eqref{A1k1k3sin}; 
	\item from Remark \ref{k1k3sameremark}, we have $\{Z_2,\chi_{2}^{(2,\rm bis)}\}\sim \epsilon^3$ and $\{R_{2}^{(2)},\chi_{2}^{(2,\rm bis)}\}\sim \epsilon^3$.
	\item the normal form term $Z_{2,\rm bis}$ does not depend on $e$ or $\omega$. Moreover, by definition, it does not depend on $u$ and $\phi$. Thus, in view of Remark \ref{remarkOrderM2} we have that $\{Z_{2,\rm bis},\chi_{j}^{(j)}\}\sim \epsilon^{j+1}$, $\forall j\ge3$ (as also shown for $Z_2$). 
\end{itemize}

From the first point of the Proposition, we have $\{Z_1,\chi_{2}^{(2,\rm bis)}\}\sim \epsilon^3$.
This concludes the demonstration of point (iii) of Proposition \ref{PropositionImportant}.

\hfill\(\Box\)

\paragraph{Proof of Lemma \ref{lemmacentrale}}
At the second step of the normalization process, the remainder term to normalize is $R_2^{(1)}$, which is of book-keeping order $2$. Because of the book-keeping rules, $R_2^{(1)}$ contains terms which can only depend on the following factors (powers or products) of small parameters: 
	\begin{itemize}
		\item factor 1: $\mu^2$
		\item factor 2: $\delta \Lambda^3$
		\item factor 3: $\mu \delta \Lambda$
		\item factor 4: $\mu e_P$
		\item factor 5: $\delta \Lambda e_P$
		\item factor 6: $\mu e$
		\item factor 7: $\delta \Lambda e$
		\item factor 8: $\mu \phi$
		\item factor 9:  $\delta \Lambda \phi$
	\end{itemize}
All terms in $R_2^{(1)}$ containing one of the first five factors are of type A; all terms in $R_2^{(1)}$ containing the factors 6 and 7 are of type B. Considering that $\phi=e\sin u$, by substitution we have that also all terms containing factors 8 and 9 are of type B. 

From Lemma \ref{lemmaDR}, it follows that the terms of type A are of the form \eqref{A1k1k3cos}. 

To show that all the terms of type B depend on $u$, $f_P$, or both, we need to examine the following parts of $R_2^{(1)}$:
\begin{enumerate}[i)]
	\item  $R_{2}^{(0)}$, stemming from terms of book-keeping order $2$ in the original Hamiltonian,
	\item the remainder terms produced by the homological equation and the Lie transformation at the first step of the normalization process. 
\end{enumerate}

\begin{itemize}
	\item[1)] Analysis of $R_2^{(0)}$
\end{itemize}

Let us consider, first, the initial Hamiltonian \eqref{H3D}. 
In view of the expressions for $\cos \alpha$ in \eqref{cosalpha}, $r$ in \eqref{ratior} and $r_P$ in \eqref{rPdef}, we obtain that $\mathsf{R}$ (see \eqref{tidalpotentialme}) contains terms of the form
\[
\mu D_k\frac{r^{k-1}}{r_P^{k}}=\frac{\mu D_k a^{k-1}}{a_P^k\eta_P^{2k}}\sum_{l=0}^{k}\sum_{q=0}^{k-1}\Big(\begin{array}{c} k \\ l \end{array}\Big)\Big(\begin{array}{c} k-1 \\ q \end{array}\Big)(-1)^q\epsilon^{1+q+l}e^qe_P^l\cos^q u\cos^l f_P
\]
or
\[
\begin{split}
\mu C_{k,j}\frac{r^{k-1}\cos^{j}\alpha}{r_P^{k}}= \frac{\mu C_{k,j}a^{k-2}}{a_P^k\eta_P^{2k}}\sum_{l=0}^{k}\sum_{q=0}^{k-1-j}\sum_{m=0}^{j}\sum_{z=0}^{m}&\Big(\begin{array}{c} k \\ l \end{array}\Big)\Big(\begin{array}{c} k-1-j \\ q \end{array}\Big)\Big(\begin{array}{c} j \\ m \end{array}\Big)\Big(\begin{array}{c} m \\ z \end{array}\Big)(-1)^q\\ &\epsilon^{1+q+l+m+z} e^q e_P^l\cos^q u\cos^l f_PT_0^{j-m}T_1^{m-z}T_2^{z}.
\end{split}
\]
with $D_k,C_{k,j}\in\mathbb{Q}$, $k,j\in\mathbb{Z}^+$, $k\ge 3$, $j\ge1$. 

After expanding the semi-major axis (equation \eqref{smaexp}), the terms of book-keeping order $1$ of $\mathsf{R}$ have one of the forms
\begin{equation*}
\frac{\mu}{a_P^k\eta_P^{2k}}\bar{D}_k{\aV}^{k-1}, \qquad \frac{\mu}{a_P^k\eta_P^{2k}}\bar{C}_{k,j}{\aV}^{k-2}T_0^{j}
\end{equation*}
and those of book-keeping order $2$ have one of the forms
\[
\frac{\mu}{a_P^k\eta_P^{2k}}\bar{D}_k{\aV}^{k-1}e\cos u, \qquad 
\frac{\mu}{a_P^k\eta_P^{2k}}\bar{C}_{k,j}{\aV}^{k-2}e\cos uT_0^{j}, \qquad  \frac{\mu}{a_P^k\eta_P^{2k}}{C}_{k,j}{\aV}^{k-2}T_0^{j-1}T_1,
\]
\[
\frac{\mu}{a_P^k\eta_P^{2k}}\bar{D}_k{a}^{k-1}e_P\cos f_P, \qquad 
\frac{\mu}{a_P^k\eta_P^{2k}}\bar{C}_{k,j}{a}^{k-2}e_P\cos f_PT_0^{j}.
\]

Then, considering also the terms from the Keplerian part and performing the product by $Q$, $R_1^{(0)}$ contains terms of the form
\[
\frac{\aV}{r}\frac{\mu}{a_P^k\eta_P^{2k}}\bar{D}_k{\aV}^{k-1}, \qquad \frac{\aV}{r}\frac{\mu}{a_P^k\eta_P^{2k}}\bar{C}_{k,j}{\aV}^{k-2}T_0^{j}, \qquad -\frac{\aV}{r}\frac{3}{2}\frac{\delta\Lambda^2}{{\aV}^2},
\]
with $\bar{D}_k, \bar{C}_{k,j}\in \mathbb{Q}$; we also have that $R_2^{(0)}$ contains terms linearly depending on $e$, in one of the forms 
\begin{equation}
\begin{split}
& \frac{\aV}{r}\frac{\mu}{a_P^k\eta_P^{2k}}\bar{D}_k{\aV}^{k-1}e\cos u, \qquad 
\frac{\aV}{r}\frac{\mu}{a_P^k\eta_P^{2k}}\bar{C}_{k,j}{\aV}^{k-2}e\cos uT_0^{j}, \qquad  \frac{\aV}{r}\frac{\mu}{a_P^k\eta_P^{2k}}\bar{C}_{k,j}{\aV}^{k-2}T_0^{j-1}T_1, \\
& -\frac{\aV}{r}\frac{\mu}{a_P^k\eta_P^{2k}}\bar{D}_k{\aV}^{k-1}e\cos u, \qquad -\frac{\aV}{r}\frac{\mu}{a_P^k\eta_P^{2k}}\bar{C}_{k,j}{\aV}^{k-2}T_0^{j}e\cos u, \qquad \frac{\aV}{r}\frac{3}{2}\frac{\delta\Lambda^2}{{\aV}^2}e\cos u,
\end{split}
\label{termELinear}
\end{equation}
as well as terms of the form 
\[
\frac{\aV}{r}\frac{\mu}{a_P^k\eta_P^{2k}}\bar{D}_k{a}^{k-1}e_P\cos f_P, \qquad 
\frac{\aV}{r}\frac{\mu}{a_P^k\eta_P^{2k}}\bar{C}_{k,j}{a}^{k-2}e_P\cos f_PT_0^{j}, \qquad \frac{\aV}{r}\mathcal{O}(\delta \Lambda^3).
\]

We conclude that $R_1^{(0)}$ contains only terms of type 1 or type 2 (see section \ref{section: normalization}) which do not depend on the eccentricity and are consequently of the form \eqref{A1k1k3cos}. Moreover, all the terms of $R_2^{(0)}$ which depend linearly on the eccentricity (equation \eqref{termELinear}) also necessarily depend on $u$, $f_P$, or both, as we readily to infer by considering the expressions of $T_0$ and $T_1$ in \eqref{T0T1T2def}.

\begin{itemize}
	\item[2)] Analysis of $R_2^{(1)}$
\end{itemize}
$R_2^{(1)}$ is composed by three contributes:
\[
R_{2}^{(1)}=R_{2}^{(0)}+\hat{R}_{2}^{(1)}+\hat{{R}}_{LT,2}^{(1)},
\]
where we denote by $\hat{R}_{2}^{(1)}$ and $\hat{{R}}_{LT,2}^{(1)}$ the parts of the remainder respectively coming from the homological equation and the Lie transformation at the first step of the normalization process. 
It has been already shown that the terms of $R_{2}^{(0)}$ linearly depending on $e$ depend also on $u$, $f_P$, or both. Furthermore, the remainder term ${R}_{1}^{(0)}$ (normalized at the first step) contains only terms of type 1 and 2 of the form 
\[
\epsilon\frac{\aV}{r}f(i,\eta,\Omega), \qquad 
\epsilon\frac{\aV}{r} \hat{f}_{\bm{k}}(i,\eta)\cos(k_1 u + k_2 f_P + k_3\omega+k_4\Omega), \quad k_1=k_3 
\]
with either $k_1\ge1$ or $k_2\ge1$. It follows that the residual of the homological equation of the first step is produced only by terms of type 2. To normalize these last terms, $\chi_{1}^{(1)}$ has to acquire terms of the form
\[
\frac{\epsilon}{k_1n^*+k_2n_P} \hat{f}_{\bm{k}}(i,\eta)\sin(k_1 u + k_2 f_P + k_3\omega+k_4\Omega).
\]
Applying, now, the homological equation \eqref{homeq} for the function $\chi_{1}^{(1)}$, we obtain that the residual of the homological equation yields remainder terms of book-keeping order $2$, which are of the form
\begin{equation*}
\begin{split}
\hat{{R}}^{(1)}_{{2},\bm{k}}=&-\hat{f}_{\bm{k}}(i,\eta) \frac{n_Pk_2}{n^*k_1+n_Pk_2}\frac{1}{8\eta_P^3}\frac{\aV}{r}\Bigg(-4e\eta_P^3\big(\cos((k_1-1)u+k_2f_P+k_3\omega+k_4\Omega)\\
&+\cos((k_1+1)u+k_2f_P+k_3\omega+k_4\Omega)\big)+8e_P\big(\cos(k_1u+(k_2-1)f_P+k_3\omega+k_4\Omega)\\
&+\cos(k_1u+(k_2+1)f_P+k_3\omega+k_4\Omega)\big)\Bigg).\\
\end{split}
\end{equation*}
All terms above with $k_2=0$ are equal to zero; if $|k_2|\ge1$ we find that all terms linearly depending on $e$ necessarily also depend on $f_P$.

Now, when the Lie transformation is performed at the first step, the remainder terms of book-keeping order $2$ are produced by the Poisson bracket
\[
\{R_1^{(0)},\chi_1^{(1)}\}+\{R_2^{(0)},\chi_1^{(1)}\}
\]
as we can deduce from Proposition \ref{pboutcomesorder}. 
Neither $R_1^{(0)}$ or $\chi_1^{(1)}$ depends on the eccentricity. Then, using the formulas of section \ref{sectionPB}, it is easy to verify that the Poisson bracket $\{R_1^{(0)},\chi_1^{(1)}\}$ does not produce any term linearly depending on $e$. The  Poisson bracket $\{R_2^{(0)},\chi_1^{(1)}\}$ produces remainder terms of book-keeping order $2$ through the contributions
\[
\begin{split}
& \Big(\frac{\partial  R_2^{(0)}}{\partial u}\frac{\partial u}{\partial \gamma}+\frac{\partial  R_2^{(0)}}{\partial \omega}\frac{\partial \omega}{\partial \gamma}+\frac{\partial  R_2^{(0)}}{\partial \phi}\frac{\partial \phi}{\partial \gamma}\Big)\Big(\frac{\partial \chi_1^{(1)}}{\partial u}\frac{\partial u}{\partial \Gamma}+\frac{\partial \chi_1^{(1)}}{\partial r}\frac{\partial r}{\partial \Gamma}\Big)\\ &- \Big(\frac{\partial  \chi_1^{(1)}}{\partial u}\frac{\partial u}{\partial \gamma}+\frac{\partial \chi_1^{(1)}}{\partial \omega}\frac{\partial \omega}{\partial \gamma}+\frac{\partial  \chi_1^{(1)}}{\partial r}\frac{\partial r}{\partial \gamma}\Big)\Big(\frac{\partial  R_2^{(0)}}{\partial e}\frac{\partial e}{\partial \Gamma}+\frac{\partial  R_2^{(0)}}{\partial \phi}\frac{\partial \phi}{\partial \Gamma}\Big)
\end{split}
\]
(see Remark \ref{remarkCriticalTerms}). $R_2^{(0)}$ depends at most linearly on $e$. Applying the formulas of section \ref{sectionPB}, we then observe that the eccentricity is simplified in the computation of the Poisson brackets. Hence, also the terms of book-keeping order $2$ coming from $\{R_2^{(0)},\chi_1^{(1)}\}$ do not depend on $e$. This concludes the proof. 

\hfill\(\Box\)

\section*{Acknowledgements}
I.C. has been supported by the MSCA-ITN Stardust-R,
Grant Agreement n. 813644 under the H2020 research and innovation
program. C.E. also acknowledges the support of MIUR-PRIN 20178CJA2B  `New frontiers of Celestial Mechanics: theory and applications'.

\bibliographystyle{unsrt}
\bibliography{ref}

\end{document}